\documentclass[11pt]{article}
\usepackage{amsfonts,amsmath,amssymb, amsthm}
\usepackage{algorithmic, algorithm, authblk}
\usepackage[margin = 1in]{geometry}
\usepackage{hyperref}

\bibliographystyle{abbrv}

\newcommand{\C}{{\cal C}}


\newcommand{\ListLengths}{\setlength{\itemsep}{0ex}\setlength{\topsep}{1ex}\setlength{\partopsep}{0ex}}

\newenvironment{mylowitemize}{\begin{list}{$\bullet$}{\setlength{\leftmargin}{1em}\ListLengths}}{\end{list}}

\newcommand{\junk}[1]{}
\makeatletter
\newcommand\footnoteref[1]{\protected@xdef\@thefnmark{\ref{#1}}\@footnotemark}
\makeatother

\title{Plane Gossip: Approximating rumor spread in planar graphs\footnote{This material is based upon research supported in part by the U. S. Office of Naval Research under award number N00014-12-1-1001 and National Science Foundation under award number CCF-1527032.}}

\author[1]{Jennifer Iglesias\footnote{Supported by the National Science Foundation Graduate Research Fellowship Program under Grant No. 2013170941.}}
\author[2]{Rajmohan Rajaraman}
\author[1]{R. Ravi}
\author[2]{Ravi Sundaram}
\affil[1]{Carnegie Mellon University, Pittsburgh PA, USA\\ \texttt{\{jiglesia,ravi\}@andrew.cmu.edu}}
\affil[2]{Northeastern University, Boston MA, USA\\ \texttt{\{rraj,koods\}@ccs.neu.edu}}

\newcommand{\newthmwithin}[3]{\newtheorem{#1q}{#2}[#3]
                        \newenvironment{#1}{\begin{#1q}\sf}{\end{#1q}}}

\newcommand{\newthm}[3]{\newtheorem{#1q}[#2q]{#3}
                        \newenvironment{#1}{\begin{#1q}\sf}{\end{#1q}}}

\newthmwithin{theorem}{Theorem}{section}
\newthm{definition}{theorem}{Definition}
\newthm{lemma}{theorem}{Lemma}
\newthm{conjecture}{theorem}{Conjecture}
\newthm{corollary}{theorem}{Corollary}

\begin{document}

\maketitle
\thispagestyle{empty}

\begin{abstract}
We study the design of schedules for multi-commodity multicast. In this problem, we are given an
undirected graph $G$ and a collection of source-destination pairs, and
the goal is to schedule a minimum-length sequence of matchings that
connects every source with its respective destination.
Multi-commodity multicast models a classic information
dissemination problem in networks where the primary communication
constraint is the number of connections that a given node can make,
not link bandwidth. Multi-commodity multicast and its special cases,
(single-commodity) broadcast and multicast, are all NP-complete and the best
approximation factors known are $2^{\widetilde{O}(\sqrt{\log n})}$, $O(\log
n/\log\log n)$, and $O(\log k/\log\log k)$, respectively, where $n$
is the number of nodes and $k$ is the number of terminals in the
multicast instance.

Multi-commodity multicast is closely related to the problem of finding
a subgraph of optimal poise, where the poise is defined as the sum of
the maximum degree of the subgraph and the maximum distance between
any source-destination pair in the subgraph. We first show that for
any instance of the single-commodity multicast problem, the minimum
poise subgraph can be approximated to within a factor of $O(\log k)$
with respect to the value of a natural LP relaxation in an instance
with $k$ terminals.  This is the first upper bound on the integrality
gap of the natural LP; all previous algorithms, both combinatorial and
LP-based, yielded approximations with respect to the integer optimum.
Using this integrality gap upper bound and shortest-path separators in
planar graphs, we obtain our main result: an $O(\log^3 k\log
n/(\log\log n))$-approximation for multi-commodity multicast for
planar graphs, where $k$ is the number of source-destination pairs.

We also study the minimum-time radio gossip problem in planar graphs where a message from each node must be transmitted to all other nodes under a model where nodes can broadcast to all neighbors in a single step but only nodes with a single broadcasting neighbor get a non-interfered message.
In earlier work (Iglesias et al., FSTTCS 2015), we showed a strong $\Omega(n^{\frac12 - \epsilon})$-hardness
of approximation for computing a minimum gossip
schedule in general graphs.
Using our techniques for the telephone model, we give an $O(\log^2 n)$-approximation for radio
gossip in planar graphs breaking this barrier.
Moreover, this is the first bound for radio gossip
given that does not rely on the maximum degree of the graph.

Finally, we show that our techniques for planar graphs extend to
graphs with excluded minors.  We establish
polylogarithmic-approximation algorithms for both multi-commodity
multicast and radio gossip problems in minor-free graphs.
\end{abstract}

\newpage

\setcounter{page}{1}

\section{Introduction}
\label{sec:intro}

Rumor spreading in networks has been an active research area with
questions ranging from finding the minimum possible number of messages
to spread gossip around the network~\cite{gossip0,gossip,gossip1} to
finding graphs with minimum number of edges that are able to spread
rumors in the minimum possible time in the network~\cite{brgraphs}.
There is also considerable work in the distributed computing
literature on protocols for rumor spreading and gossip based on simple
push and pull paradigms (e.g., see~\cite{feige-rumor,karp,gia2,pana1}).

The focus of this paper is the class of problems seeking to minimize
the time to complete the rumor spread, the prototypical example being
the {\bf minimum broadcast time problem} where a message at a root
node must be sent to all nodes via connections represented by an
undirected graph in the minimum number of rounds.  Under the popular
``telephone'' model, every node can participate in a telephone call
with at most one other neighbor in each round to transmit the message,
and the goal is to minimize the number of rounds. This problem has
seen active work in designing approximation
algorithms~\cite{korpel,Ravi94,guha,kortundir}.  One generalization of
broadcast is the {\bf minimum multicast time problem}: We are given an
undirected graph $G(V,E)$ representing a telephone network on $V$,
where two adjacent nodes can place a telephone call to each other. We
are given a source vertex $r$ and a set of terminals $R\subseteq
V$. The source vertex has a message and it wants to inform all the
terminals of the message. To do this, the vertices of the graph can
communicate in rounds using the telephone model. The goal is to
deliver the message to all terminals in the minimum number of rounds.

Recently,  a  more  general  demand model  called  the  multicommodity
multicast was  introduced in~\cite{nikzad2014}.   In the  {\bf minimum
multicommodity multicast  time problem},  a graph $G(V,E)$  is given
along with  a set of  pairs of nodes $P=\{(s_i,t_i)|1\leq  i\leq k\}$,
known as  demand pairs.  Each vertex  $s_i$ has a message  $m_i$ which
needs be delivered  to $t_i$. The vertices communicate  similar to the
multicast problem. The goal is to deliver  the message from each source
to  its corresponding sink in the minimum number of rounds.
Note that there is no bound on the number of messages that
can be exchanged in a telephone call.
In this  sense, the  telephone model
captures a classic information dissemination problem where the primary
communication constraint  is the  number of  connections that  a given
node can make in each round, not link bandwidth.

\subsection{Poly-logarithmic approximation for planar multicommodity multicast.}
While even  sub-logarithmic ratio  approximations have been  known for
the                minimum               time                multicast
problem~\cite{korpel,Ravi94,guha,kortundir},     the    best     known
approximation guarantees for the multicommodity case~\cite{nikzad2014}
is $\tilde{O}(2^{\sqrt{\log  k}})$ where $k$ is  the number of
different source-sink pairs.

\begin{theorem}
\label{thm:planarmc}
There is a  polynomial time algorithm for  minimum time multicommodity
multicast with $k$ source-sink pairs in $n$-node undirected planar
graphs that constructs  a schedule of length $O(OPT \log^3  k
\frac{\log n}{\log \log n})$ where  $OPT$ is the
length of the optimal schedule.
\end{theorem}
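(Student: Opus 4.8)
The plan is to reduce the planar multicommodity instance to a poly-logarithmic number of \emph{single-commodity} multicast sub-instances, each solved through the $O(\log k)$ integrality-gap bound for minimum poise stated above, and to control the round complexity of stitching them together using a shortest-path separator decomposition of the planar host graph. Recall the poise viewpoint: given any schedule of length $OPT$, the edges used in telephone calls form a subgraph $H^\star$ of maximum degree at most $OPT$ in which each message $m_i$ physically traverses a path of length at most $OPT$, so $\mathrm{dist}_{H^\star}(s_i,t_i)\le OPT$ and in particular $\mathrm{dist}_G(s_i,t_i)\le OPT$ for every pair. One cannot in general turn a poise-$p$ subgraph into an $\widetilde{O}(p)$-round \emph{multicommodity} schedule (that would contradict the $2^{\widetilde{O}(\sqrt{\log k})}$ barrier), so planarity must be used to split the demand set; for a \emph{single} commodity, however, we have the two tools we need: (i) a subgraph of poise $p$ connecting a root to its terminals yields a multicast schedule of length $O(p\cdot\log n/\log\log n)$ by the classical ``rapid rumor ramification'' argument, and (ii) the minimum-poise single-commodity problem can be approximated within $O(\log k)$ of its natural LP value.

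Next, apply a shortest-path separator decomposition: in any planar subgraph there is a set of $O(1)$ paths, each a shortest path of $G$, whose deletion leaves components each holding at most a $2/3$ fraction of the demand endpoints; recursing yields a laminar decomposition of depth $O(\log k)$ in which the separator paths at a fixed depth are pairwise vertex-disjoint. For each pair $(s_i,t_i)$ let $Q_i$ be the first separator path met by a fixed $G$-shortest $s_i$--$t_i$ path; since that path has length at most $OPT$ we get $\mathrm{dist}_G(s_i,Q_i)\le OPT$ and $\mathrm{dist}_G(t_i,Q_i)\le OPT$, and, crucially, because a sub-path of a shortest path is itself shortest, the entry and exit points $p_i,p_i'$ of that path on $Q_i$ satisfy $\mathrm{dist}_{Q_i}(p_i,p_i')=\mathrm{dist}_G(p_i,p_i')\le 3\,OPT$. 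Hence every demand can be served by the concatenation $s_i\rightsquigarrow p_i$ (length $\le OPT$), then $p_i\rightsquigarrow p_i'$ \emph{inside} $Q_i$ (length $\le 3\,OPT$), then $p_i'\rightsquigarrow t_i$ (length $\le OPT$).

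The schedule processes the $O(\log k)$ levels of the decomposition one at a time; in the phase for level $\ell$ it serves all pairs whose assigned $Q_i$ sits at level $\ell$, so the relevant separator paths are disjoint. A phase runs three stages --- ``up'' ($s_i\to p_i$), ``along'' ($p_i\to p_i'$, within the disjoint shortest paths), and ``down'' ($p_i'\to t_i$) --- and each stage is expressed as a bounded number of single-commodity multicast instances: for the up stage, root the instance on the separator path $Q$ of a level-$\ell$ component and take the relevant $s_i$ as terminals; $H^\star$ together with $Q$ supplies a fractional solution of poise $O(OPT)$, so tool (ii) gives a subgraph of poise $O(OPT\log k)$ and tool (i) converts it to a schedule of length $O(OPT\log k\cdot\log n/\log\log n)$; the down stage is symmetric, and the along stage is a pipelining problem on vertex-disjoint paths in which each message moves only $O(OPT)$ hops. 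Summing the per-stage cost over the $O(\log k)$ phases gives the claimed $O(OPT\cdot\log^3 k\cdot\log n/\log\log n)$, one logarithm from the recursion depth, one from the single-commodity integrality gap, and one absorbed into the number of sub-instances/scales needed to make the reductions exact and to verify that the restriction of the global (fractional or optimal) solution to a component stays feasible for the sub-instance.

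The main obstacle is precisely the ``along'' stage and the bookkeeping of the third logarithm. A separator path can have length $\Theta(n)$, so it cannot be used as a terminal set (its poise alone would be $\Theta(n)$); one must instead argue that the congestion on any edge of $Q_i$ created by the $O(OPT)$-length slides of all pairs routed through it is itself only $\widetilde{O}(OPT)$, or route those slides through $O(\log k)$ portals whose cost is again charged to the single-commodity subroutine. Making all three stages genuinely single-commodity, each with an $O(OPT)$-poise LP witness inside its component, and checking that feasibility is preserved under restriction to a component, is the technical heart of the proof; the remainder is routine accumulation over the $O(\log k)$ levels.
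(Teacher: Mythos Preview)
Your high-level architecture matches the paper's---shortest-path separators, $O(\log k)$ recursion depth, reduction to single-commodity poise instances, and the $O(\log k)$ integrality gap combined with the $O(\log n/\log\log n)$ poise-to-schedule conversion. But the mechanism you use to \emph{assign} pairs to separator paths is the wrong one, and this is where the argument breaks.

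You assign $(s_i,t_i)$ to the first separator $Q$ hit by a fixed $G$-shortest $s_i$--$t_i$ path, and then invoke $H^\star$ as the poise witness. The problem is that these two objects need not agree. If $s_i,t_i$ lie in the same component after removing $Q$ but the $G$-shortest path happens to cross $Q$, you assign the pair to $Q$, yet the $H^\star$ path from $s_i$ to $t_i$ may avoid $Q$ entirely, so ``$H^\star\cup Q$'' gives no short $s_i$--$Q$ path and your up-stage certificate fails. Conversely, if the $G$-shortest path stays inside the component (so you recurse on the pair) but the $H^\star$ path leaves the component through $Q$ or through an earlier separator, then $H^\star$ restricted to the component no longer connects $s_i$ to $t_i$, and you have no poise certificate for the recursive sub-instance. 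Your proposal flags this (``verify that the restriction \ldots stays feasible'') as bookkeeping to be absorbed in the third logarithm, but it is not bookkeeping: with a topological assignment rule there is simply no bound available.

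The paper's fix is to let the LP itself do the partitioning. Solve the generalized-Steiner POISE-LP once, and for each pair look at the fraction of its unit $s_i$--$t_i$ flow that touches the separator. If that fraction is at least $\gamma=1/\log k$, scale by $3/\gamma$ to get unit flow from $s_i$ (and from $t_i$) to the separator; this \emph{is} the fractional witness for the rooted single-commodity instance (with a binary tree glued onto the separator path to serve as root), costing one $\log k$ in the degree bound. If the fraction is below $\gamma$, discard that part and scale the remaining in-component flow by $1/(1-\gamma)$ to get a feasible LP solution for the recursive sub-instance; over $O(\log k)$ levels this compounding stays $O(1)$. This is precisely why the paper needs an \emph{integrality-gap} bound for single-commodity poise rather than a mere approximation ratio: the witness handed to the rounding subroutine is a scaled piece of a global LP solution, not an integral optimum.

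Finally, the ``along'' stage you flag as the main obstacle is in fact the easy part once the above is in place: since the separator is a shortest path, the entry and exit points of any pair are at $G$-distance at most $O(\text{poise of the rounded tree})$, hence also at that distance along the path, and alternating odd/even matchings moves all messages the required number of hops without further loss.
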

This result  to extends  in a natural  way to  bounded genus
graphs. Our results  make  critical use  of  the fact  that planar
graphs  admit small-size balanced vertex separators that  are a combination of three
shortest paths starting from any  given node~\cite{sep0}. We aggregate messages at
the paths,  move them  along the  path and then  move them  onto their
destinations  using  a  local multicast.
To break the overall multi-commodity multicast problem into recursive subproblems, we solve an LP relaxation for the overall problem and for those pairs for which the LP uses the separator path nodes in sending messages by a "large" amount, we aggregate them to the separator paths and move them along the paths.
However, to  define  this  aggregation
automatically we  need to use a linear program which requires us to
relate another lower  bound for the schedule length  that we describe
next.

\subsection{Poise and a new LP rounding algorithm.}
Suppose that  the (single-commodity) multicast problem in  a graph $G$ with  root $r$ and
terminals $R$ admits a multicast  schedule of length $L$. Consider all
the  nodes $I  \subseteq V$  in  the graph  that are  informed of  the
message from the root in the course of the schedule. For every node $v
\in I$ consider the edge through which $v$ first heard the message and
direct this edge into $v$. It is  easy to verify that this set of arcs
forms  an out-arborescence  $T$ rooted  at  $r$ and  spanning $I$.  In
particular, every node in $I$ except $r$ has in-degree exactly one and
there is a directed path from $r$ to every vertex in $I$.

\begin{definition}
Define the poise of an undirected tree $T$ to be the sum of the
diameter of the tree and the maximum degree of any node in it.  Define
the poise of a directed tree to be that of its undirected version
(ignoring directions).
\end{definition}

The discussion above of constructing  a directed tree from a multicast
schedule  implies  that the  poise  of  the  tree constructed  from  a
multicast   schedule   of   length   $L$    is   at   most   3L   (see
also~\cite{Ravi94}).  The  following lemma  gives the relation  in the
other  direction.

\begin{lemma}~\cite{Ravi94}
\label{lem:poisetomc}
Given a  tree on $n$  nodes of poise $L$,  there is a  polynomial time
algorithm  to  construct  a  broadcast scheme  of  length  $O(L  \cdot
\frac{\log n}{\log \log n})$ from any root.
\end{lemma}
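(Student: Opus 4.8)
The plan is to convert the given tree of poise $L$ into a broadcast schedule by a divide-and-conquer argument that balances the cost of ``concentrating'' the message toward a centroid against the cost of recursing on subtrees. First I would root the tree arbitrarily and repeatedly extract a centroid edge or vertex: since the tree has $n$ nodes, there is a vertex (or edge) whose removal leaves components each of size at most $n/2$. The broadcasting strategy then has two phases. In the first phase, starting from the node that currently holds the message, I route the message along the tree path to the centroid; because the tree has diameter at most $L$, this path has length at most $L$, so this takes at most $L$ rounds along a single path. In the second phase, the centroid disseminates the message into each of its subtrees: it calls each neighbor (child in the centroid decomposition) one at a time, and within each subtree we recurse. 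The key accounting observation is that the centroid has degree at most $L$ (poise bound), so ``feeding'' all subtrees sequentially costs at most $L$ extra rounds, and the recursion depth is $O(\log n)$ since each step halves the number of nodes.

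The step I expect to be the main obstacle is getting the recurrence to yield $O(L\log n/\log\log n)$ rather than the naive $O(L\log n)$. The naive analysis charges $O(L)$ per level of recursion over $O(\log n)$ levels. To shave the $\log\log n$ factor, I would use a sharper decomposition: instead of splitting off one subtree at a time from a degree-$\le L$ centroid, I would, at each level, find a set of $O(\log n / \log\log n)$-or-so ``splitter'' vertices (or more precisely, use the fact that after routing to a centroid we can inform many subtrees in parallel once their roots are reached) so that the number of nodes drops by a factor of roughly $\log n$ at each of only $O(\log n/\log\log n)$ levels. Concretely, one picks a centroid-like separator so that each remaining piece has size at most $n / \Theta(\log n)$; this requires the separator itself to have size about $\log n$, and informing all of its vertices from the current message-holder costs $O(L + \log n) = O(L)$ rounds (using that along the separator subtree degrees and diameter are bounded by $L$, and $\log n \le L$ may be assumed WLOG, since otherwise the trivial $O(L^2)=O(L\log n/\log\log n)$ bound — or a direct argument — already suffices). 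With pieces shrinking by a $\log n$ factor, the recursion bottoms out in $O(\log n / \log \log n)$ levels, each costing $O(L)$ rounds, giving the claimed bound.

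To make the parallel dissemination precise I would argue as follows. Suppose inductively that any subtree on $m \le n$ nodes of poise at most $L$ can be broadcast from any of its vertices in $f(m)$ rounds. Given the $n$-node tree and a starting vertex $u$, first spend $\le L$ rounds moving the message from $u$ along the unique tree path to a chosen near-centroid vertex $c$; then spend $O(L)$ further rounds having $c$ (and its informed descendants along the small separator) inform the roots of all the $\Theta(\log n)$ pieces — here I use that the union of these pieces' attachment structure forms a subtree of poise $\le L$, so by Lemma~\ref{lem:poisetomc} applied to this small tree (or by a direct greedy argument since its size is only $O(\log n)$) this sub-broadcast finishes in $O(L \cdot \frac{\log\log n}{\log\log\log n}) = O(L)$ rounds, absorbed into the constant; finally recurse in parallel on all pieces. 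This gives $f(n) \le O(L) + f(n/\Theta(\log n))$, hence $f(n) = O(L \cdot \frac{\log n}{\log\log n})$, as desired.

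Two technical points will need care. First, the ``recurse in parallel'' step is legitimate in the telephone model only because the pieces are vertex-disjoint except at their shared roots, and we have arranged that each piece's root is informed before its recursive call begins — so the per-round matching constraint is respected by taking the disjoint union of the matchings used in each piece. Second, the base case and the regime $\log n > L$: when $L$ is small relative to $\log n$ one must verify that the decomposition into $\Theta(\log n)$-size pieces with a separator of size $\le L$ still makes progress; here I would instead decompose into pieces of size at most $n/\Theta(L)$ using a separator path of length $\le L$ (a single diameter-bounded path suffices to split a tree), yielding recursion depth $O(\log n/\log L)$ and total cost $O(L \cdot \log n / \log L)$, which is $O(L\log n/\log\log n)$ whenever $L \ge \log n$ is false handled separately — I would reconcile the two regimes so the bound $O(L\log n/\log\log n)$ holds uniformly.
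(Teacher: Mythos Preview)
The paper does not prove this lemma; it simply cites it from~\cite{Ravi94}. So there is no ``paper's own proof'' to compare against, and I evaluate your argument on its merits.

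Your divide-and-conquer strategy is the right shape, and the case $L \ge \log n$ goes through essentially as you describe: a tree separator of at most $\log n$ vertices splits the tree into pieces of size at most $n/\log n$; informing all separator vertices takes at most $(\log n) - 1 \le L$ rounds trivially (one new vertex per round) plus $\le L$ rounds to reach the separator, and the recursion depth is $O(\log n/\log\log n)$.

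The gaps are in your handling of $L < \log n$. First, the throwaway remark that ``the trivial $O(L^2)$ bound suffices'' is false: since $n \le L^L$ forces $L \ge \log n/\log L$, in this regime $L$ can sit anywhere between $\Theta(\log n/\log\log n)$ and $\log n$, and near the upper end $L^2$ is strictly larger than $L\log n/\log\log n$. Second, your fallback claim that ``a single diameter-bounded path suffices to split the tree into pieces of size $\le n/\Theta(L)$'' is not true in general (e.g.\ in a complete binary tree of depth $L$, removing any root-to-leaf path leaves a piece of size about $n/2$, not $n/L$). The fix is easy but you should state it: take a standard tree separator of at most $L$ \emph{vertices} (not a single path), which always exists and whose removal leaves pieces of size at most $n/L$; informing those $\le L$ vertices from the current holder takes $O(L)$ rounds trivially, and then the recursion depth $O(\log n/\log L) = O(\log n/\log\log n)$ (using $L \ge \log n/\log\log n$, hence $\log L = \Theta(\log\log n)$) gives the bound.

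For comparison, the argument usually attributed to~\cite{Ravi94} is more direct and avoids recursion altogether: root the tree, and at every internal vertex inform the children in decreasing order of subtree size. Along any root-to-leaf path $v_0,\dots,v_h$, if $v_j$ has rank $i_j$ among its siblings in this order then $|T_{v_j}| \le |T_{v_{j-1}}|/i_j$, so $\prod_j i_j \le n$; maximizing $\sum_j i_j$ subject to this product constraint, $i_j \le d$, and $h$ terms gives $\sum_j i_j \le h + (d-1)\log_d n$. A short case analysis (on whether $d \ge \log n/\log\log n$) then shows $h + d\log n/\log d = O(L\log n/\log\log n)$. This bypasses the separator machinery and the $L \lessgtr \log n$ split entirely.
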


Note  that a  complete $d$-ary  regular tree of depth $d$ requires  time $d^2$  to
finish multicast from the root; If the size of the tree is $n$, then $d  = O(\frac{\log  n}{\log \log  n})$. For this tree $L = O(\frac{\log  n}{\log \log  n})$ while any broadcast scheme
takes $\Omega((\frac{\log  n}{\log \log  n})^2)$ steps  showing that the multiplicative factor
is necessary.

Even  though approximation  algorithms  for minimum  poise trees connecting a root to a set of terminals were
known from earlier work~\cite{Ravi94,kortundir,guha}, their guarantees
are  with  respect to  an  optimal  (integral)  solution and  not  any
specific  LP  relaxation.
In   particular,  the  LP-based  algorithm of~\cite{Ravi94} rounds a solution to the poise LP in phases without preserving the relation of the residual LPs that arise in the phases to the LP for the poise of the whole graph. Similarly, the  LP-based  algorithm
of~\cite{guha}   solves   a  series   of   LPs   determining  how   to
hierarchically pair terminals and form the desired broadcast tree with
cost within a logarithmic factor of the integral optimum poise, but without relating the resulting tree to the LP value of the poise of the original graph.
It is not straightforward to use these methods to derive an integrality gap for the minimum poise LP, and this has remained an open problem.
Deriving an
approximation algorithm for minimum poise subgraphs for the single-commodity multicast version with a small {\em
  integrality  gap}\/ is  a critical  ingredient in  our approximation
algorithm    for   multicommodity    multicast  problem in    planar   graphs
(Theorem~\ref{thm:planarmc}). We derive the first such result.

\begin{theorem}
\label{thm:poise-round}
Given a fractional feasible solution of value $L$ to a natural linear
programming relaxation of the minimum poise of a tree connecting a
root $r$ to terminals $R$ (POISE-L LP, see
Section~\ref{sec:multiflow-round}), there is a polynomial time
algorithm to construct a tree spanning $r \cup R$ of poise $O(L \log
k)$ where $k = |R|$ and $n = |V|$.
\end{theorem}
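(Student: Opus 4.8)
The plan is to round a flow formulation of the POISE-$L$ LP on a layered graph by a correlated (Garg--Konjevod--Ravi style) randomized rounding, and then to extract a tree. First I would spell out the LP. Let $G_L$ be the layered graph with vertices $(v,i)$ for $v\in V$, $0\le i\le L$, with arcs $(u,i)\to(v,i+1)$ for every edge $uv\in E$ together with ``wait'' arcs $(v,i)\to(v,i+1)$, source $(r,0)$, and, for each terminal $t\in R$, a sink fed by every copy $(t,i)$. A feasible solution of value $L$ supplies, for each terminal $t$, a unit $(r,0)$-to-$t$ flow $f^t$ in $G_L$ (so every flow path has length at most $L$, which will bound root-to-terminal distances), consolidation variables $x_a\ge f^t_a$ for all arcs $a$ and terminals $t$, and the degree budget $\sum_{a\in\delta(v)}x_a\le L$ at every original vertex $v$, where $\delta(v)$ is the set of arcs incident to some copy of $v$. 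Decomposing $f^t$ into paths presents it as a distribution $\mu^t$ over $(r,0)$-to-$t$ paths of length at most $L$.

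The heart of the argument is the rounding. The naive scheme --- sample one path from each $\mu^t$ independently and take the union --- fails: the expected load on a vertex $v$ is then governed by $\sum_t\sum_{a\in\delta(v)}f^t_a$, which can be $\Theta(k)$ times the budget $\sum_{a\in\delta(v)}x_a$, because the LP pays only for $\max_t f^t_a$. To close this gap the paths must be sampled in a correlated way so that flow shared by the routes of many terminals is realized by the \emph{same} arcs. Following Garg--Konjevod--Ravi, one scales the flow by a boost factor $\alpha=\Theta(\log k)$ and walks down the layers of $G_L$: conditioned on a copy $(v,i)$ having been kept, each out-arc $a$ of $(v,i)$ is kept independently with probability proportional to $x_a$, truncated at $1$. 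Arranged correctly this gives $\Pr[a\text{ kept}]\le\alpha\,x_a$, so the expected degree of $v$ in the kept subgraph is at most $\alpha\sum_{a\in\delta(v)}x_a=O(L\log k)$; a Chernoff bound plus a union bound over the polynomially many vertices then gives maximum degree $O(L\log k)$ with high probability. At the same time one aims to push the connection probability of any fixed terminal up to $1-o(1/k)$ using the boost, so that a union bound over the $k$ terminals connects every terminal along a kept path of length at most $L$ (any residual failure can be repaired by re-running, or by attaching the few missed terminals directly, at lower-order cost). Call the resulting subgraph $H$: it contains $r\cup R$, has maximum degree $O(L\log k)$, and every vertex of $H$ lies on a kept path of length at most $L$, so $H$ has radius at most $L$ from $r$.

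I expect the delicate step to be precisely this rounding analysis on the layered \emph{DAG} $G_L$ rather than on a tree. The clean ``telescoping'' of conditional keep-probabilities and the second-moment lower bound on the connection probability are classically stated for trees, where every node has a unique parent, whereas a copy $(v,i)$ of $G_L$ is reachable along many paths; one cannot simply unfold $G_L$ into a tree (that is exponentially large), and one must not pass through a metric tree embedding (that would inject an unwanted $\log n$ factor and break the $O(\log k)$ bound). The fix is to run the sampling directly on $G_L$ --- distributing the flow entering $(v,i+1)$ among its in-arcs according to their $x$-values and kept tails --- and to re-establish the connectivity estimate in this layered setting, using that each $\mu^t$ is supported on paths of length at most $L$. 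Getting the constants right so that the \emph{degree} blow-up is only $O(\log k)$ while the connection probability stays $1-o(1/k)$ is the technical core of the theorem.

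Finally, to turn $H$ into a tree, take a breadth-first-search tree $T$ of the connected component of $r$ in $H$. A BFS tree only deletes edges and preserves distances from $r$, so $\deg_T(v)\le\deg_H(v)=O(L\log k)$ and $\mathrm{dist}_T(r,v)=\mathrm{dist}_H(r,v)\le L$ for every vertex $v\in T$. Hence $T$ spans $r\cup R$, has diameter at most $2L$, and has poise $O(L)+O(L\log k)=O(L\log k)$, as claimed; here $n=|V|$ enters only through the union bounds and the polynomial running time, not through the approximation factor.
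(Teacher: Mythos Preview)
Your proposal has a genuine gap at exactly the place you flag as ``delicate.'' The rule you state --- conditioned on the tail being kept, keep out-arc $a$ with probability $\min(\alpha x_a,1)$ --- does not telescope on the layered DAG, and the connection probability it yields is far too small. Take a single terminal $t$ whose unit flow splits equally over $m$ internally disjoint length-$2$ paths $r\!-\!u_i\!-\!t$, so $x_{(r,u_i)}=x_{(u_i,t)}=1/m$ and the LP poise is $O(1)$. Under your rule each path survives with probability $(\alpha/m)^2$, so $\Pr[t\text{ connected}]\lesssim \alpha^2/m$, which goes to $0$ as $m\to\infty$ even with $\alpha=\Theta(\log k)$. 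The GKR telescoping that makes $\Pr[\text{leaf reached}]=x_{\text{leaf}}$ requires the conditional probability $x_e/x_{\text{parent}(e)}$, and on a DAG there is no unique parent; your substitute loses a factor $\alpha x_a$ at every layer. Even if one unfolds to a tree and uses the genuine GKR conditional probabilities, the standard second-moment bound gives only $\Pr[\text{connected}]=\Omega(\alpha/(L+\alpha))$ for a height-$L$ tree with flow $\alpha$, which is not $1-o(1/k)$ once $L\gg\log k$. Amplifying by independent repetition then costs an additional factor of $L$ in degree, yielding poise $O(L^2\log k)$ rather than $O(L\log k)$. So the ``fix'' you promise is the whole theorem, and it is not clear it can be carried out along these lines.

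The paper's argument avoids this obstacle by not attempting root-to-terminal rounding at all. Instead it runs $O(\log k)$ clustering phases; in each phase it scales $x$ to an integer multigraph, applies the Lov\'asz--Cherkassky multiflow theorem with the current cluster centers as the terminal set, and extracts an edge-disjoint packing of center-to-center paths inside $2x$. A Markov argument prunes paths longer than $4L$ while keeping a constant fraction, and because the packing sits inside $2x$, choosing one surviving path per center is exactly a min-congestion path-selection instance with fractional congestion $\le 2L_1$, so the Raghavan--Thompson/KLRTVV rounding gives node degree $O(L)$ per phase. The diameter and degree each grow by $O(L)$ per phase, giving $O(L\log k)$ after $O(\log k)$ phases. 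The multiflow step is the idea you are missing: it produces terminal-to-terminal paths that respect the $x$-capacities \emph{by construction}, so the degree analysis decouples from the connectivity analysis.
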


Our LP rounding for minimum poise are based on exploiting a connection to the
theory of multiflows~\cite{Lovasz,Cherkassky,Frank-book}; this is an interesting technique in its own right that we hope
will be useful in obtaining other LP rounding results for
connectivity structures while preserving degrees and distances.

\subsection{Radio Gossip in Planar Graphs.}
Our techniques for addressing multicommodity multicast are also
applicable to radio gossip in planar graphs.  In the radio model of
communication that also occurs in rounds, a transmitting node may broadcast to multiple nodes in around
but a node may receive successfully in a given time step
only if exactly one of its neighbors transmits. The gossip problem is
a special case of the multicommodity multicast problem where the
demand pairs include all possible pairs of nodes (alternately, every
node's message must be transmitted to every other node).  The minimum
gossip problem in the radio model has been widely
studied~\cite{GasieniecPX07} but all known upper bounds involve both
the diameter and degree of the network.  In particular, for general
$n$-node graphs, there is an $\Omega(n^{\frac12 - \epsilon})$-hardness
of approximation result for computing a minimum gossip
schedule~\cite{our-fsttcs-paper}. Our next result breaks this barrier
for planar graphs (the proof and algorithm are in Section~\ref{sec:radio}).
\begin{theorem}
\label{thm:planargossip}
There is a polynomial time algorithm  for minimum time radio gossip in
an  $n$-node undirected  planar graph  that constructs  a schedule  of
length  $O(OPT \cdot  \log^2  n)$ where  $OPT$ is  the  length of  the
optimal gossip schedule.
\end{theorem}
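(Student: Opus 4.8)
The plan is to adapt the shortest-path-separator recursion behind Theorem~\ref{thm:planarmc} to the all-pairs (gossip) instance, replacing each telephone-model subroutine by a radio-model one. The only lower bound we need is $OPT \ge D$, where $D$ is the unweighted diameter of $G$: the message originating at one endpoint of a diametral pair cannot reach the other endpoint in fewer than $D$ rounds. Apply the recursive planar shortest-path separator decomposition~\cite{sep0}: at each level a piece $H$ is split by a separator $S_H$ that is a union of $O(1)$ shortest paths, each of length $O(D) = O(OPT)$, into sub-pieces each of size at most $\frac{2}{3}|H|$; the decomposition is maintained so that (thanks to the separators being shortest paths) every vertex of a sub-piece is within distance $O(D) = O(OPT)$ of $S_H$, even when one routes through $S_H$. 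The recursion tree has depth $O(\log n)$.

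The schedule is built bottom-up on this recursion tree. At a node with piece $H$, separator $S_H$, and sub-pieces $H_1,\dots,H_m$, assume inductively that every vertex of $H_\ell$ already knows every message originating in $H_\ell$, and run three phases. (i) \emph{Aggregation onto $S_H$.} Order the vertices of $H$ by distance to $S_H$ into $O(OPT)$ layers; sweeping the layers inward, move each message to $S_H$, resolving radio interference inside a layer by a Bar-Yehuda--Goldreich--Itai-style decay procedure in $O(\log n)$ rounds per layer. This costs $O(OPT\log n)$, and afterward every message of $H$ is known to some vertex of $S_H$; because each sub-piece's messages were already replicated throughout that sub-piece, no vertex needs to relay more than once into the layer structure, which is what keeps this step a bounded-radius convergecast rather than a general gather. (ii) \emph{Gossip on the separator.} $S_H$ is a union of $O(1)$ shortest paths, i.e.\ a tree on $O(OPT)$ vertices; radio gossip on it finishes in $O(OPT)$ rounds by pipelining with a fixed constant-size coloring that kills interference (message aggregation is free, so the round count is governed by the number of vertices, not messages). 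Now every vertex of $S_H$ knows every message of $H$. (iii) \emph{Dissemination.} Radio-broadcast from the source set $S_H$ into $H$, again layer by layer with $O(\log n)$-round decay per layer; by the distance bound this terminates in $O(OPT\log n)$ rounds, after which every vertex of $H$ knows every message of $H$, re-establishing the invariant one level up.

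For the accounting, the recursion nodes at a fixed level occupy pairwise non-adjacent vertex sets of $G$, so their aggregation and dissemination phases run simultaneously without interference; the separator $S_H$ of a node is disjoint from those of its siblings, and since we process levels one at a time there is no conflict with ancestor separators. Hence each of the $O(\log n)$ levels costs $O(OPT\log n)$ rounds and the whole schedule has length $O(OPT\log^2 n)$. (Base case: constant-size pieces, handled in $O(1)$ rounds.)

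The main obstacle — and the reason planarity is essential, given the $\Omega(n^{1/2-\epsilon})$ hardness in general graphs~\cite{our-fsttcs-paper} — is confining the per-level aggregation to $O(OPT\,\mathrm{polylog}\,n)$ rounds. Radio \emph{broadcast} from a fixed source is approximable in any graph, but radio \emph{gathering} is not; what saves us is the conjunction of (a) the separators being \emph{shortest} paths, which caps the number of layers in each piece by $D \le OPT$ even allowing detours through the separator, and (b) the messages of each sub-piece being pre-replicated by the recursion before they must cross a separator. Pinning down phase (i) rigorously — a decay-based interference-resolution argument showing that every message crosses to the next layer within $O(\log n)$ rounds, with a union bound over all layers and all $O(\log n)$ levels — is the technical heart; one also has to verify that the recursive decomposition of~\cite{sep0} can simultaneously guarantee $O(1)$-path separators of length $O(OPT)$ \emph{and} the small in-piece distance to each separator at every level. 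The same plan, substituting the shortest-path separators for graphs with an excluded minor, should give the minor-free version.
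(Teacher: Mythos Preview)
Your plan diverges from the paper's in a way that exposes a real gap. The paper does \emph{not} maintain ``every vertex of a piece knows every message of that piece'' level by level; instead it runs a pure \emph{gather} to a single root and then invokes a radio broadcast once at the end. The reason is exactly the obstacle you flag but then set aside: the pieces arising deep in the recursion need not have diameter $O(OPT)$, so neither ``$S_H$ has $O(OPT)$ vertices'' (your phase~(ii)) nor ``there are $O(OPT)$ distance-layers to $S_H$ inside $H$'' (your phases~(i) and~(iii)) is available. The separator paths at level $i$ are shortest only in the level-$i$ piece, whose diameter can exceed that of $G$ once earlier separators have been deleted; the paper says this explicitly (``the diameter of subgraphs formed by the decomposition are not guaranteed to be bounded''). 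Your proposal acknowledges this must be ``verified'' but offers no argument, and the only lower bound you invoke, $OPT\ge D$, says nothing about diameters of induced subgraphs.

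What the paper does instead is the $3L$-matching device (Lemma~\ref{lem:radiomatching}). Rather than gossiping along a possibly long separator, it subsamples every $(2L{+}1)$-st vertex on each separator path into a set $N_i$, gathers the path's messages at $N_i$ in $O(L)$ interference-free rounds, and then ships each $N_i$-message to some vertex on an \emph{earlier} separator $U_i$ via a short path. The existence of a $3L$-degree-bounded assignment from $N_i$ to $U_i$ is proved directly from the optimal schedule: the optimal path from each $v\in N_i$ to the root has length $\le L$, and because two $N_i$-nodes on the same separator path are $2L{+}1$ apart on a shortest path of the current piece, their optimal-schedule prefixes inside that piece cannot share a vertex, bounding the load at each $u\in U_i$ by $3L$. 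This is the missing idea: it ties the per-level cost to $OPT$ without ever appealing to the (unbounded) diameter of the piece, and it is what your layered-decay convergecast cannot supply.
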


Since  radio  broadcast from  any  node  can  already be  achieved  with
additive poly-logarithmic time overhead above the optimum~\cite{planarradio},
our algorithm for radio
gossip focuses  on gathering all  the messages  to a single  node. For
this,  we use  the path-separator  decomposition in  planar graphs  to
recursively   decompose  the   graph   and   gather  messages   bottom
up. However, the diameter of subgraphs formed by the decomposition are
not  guaranteed  to be  bounded  so  we  use a  carefully  constructed
degree-bounded  matching   subproblem  to  accomplish   the  recursive
gathering:  these techniques  adapt and  extend the  methods used  for
constructing telephone  multicast schedules~\cite{nikzad2014}  but apply them 
for the first time to the radio gathering case.

\subsection{Minor-free Graphs}

Both our results on planar graphs also naturally extend to minor-free graphs, as similar path separator results are also known for minor-free graphs~\cite{sep1}. They are detailed in Section~\ref{sec:minorfree}in the Appendix for completeness.
\begin{theorem}
\label{thm:minormc}
There is a  polynomial time algorithm for  minimum time multicommodity
multicast with $k$ source-sink pairs in $n$-node undirected $H$-minor-free
graph for a constant sized $H$ that constructs  a schedule of length $O(OPT \log^3  k
\frac{\log n}{\log \log n})$ where  $OPT$ is the
length of the optimal schedule.
\end{theorem}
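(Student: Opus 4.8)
The plan is to run the argument behind Theorem~\ref{thm:planarmc} essentially verbatim, replacing the planar three-shortest-path separator of~\cite{sep0} with its $H$-minor-free analogue from~\cite{sep1}. Recall that for a fixed (constant-size) $H$, every $H$-minor-free graph $G$, together with a shortest-path tree rooted at an arbitrary vertex, admits a separator consisting of $c = c(H) = O(1)$ shortest paths of $G$ whose removal leaves connected components each carrying at most half of any prescribed weight on the vertices. I would apply this with the weight of a component equal to the number of demand endpoints $\{s_i,t_i\}$ it contains, so that deleting the $c$ separator paths splits the surviving demand set in a balanced way and the recursion has depth $O(\log k)$.

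First I would set up the recursive subroutine exactly as in the planar case. On the current subgraph with its surviving demand pairs: (i) solve the natural flow-based LP relaxation of multicommodity multicast, whose value lower-bounds $OPT$; (ii) compute the $c$ shortest separator paths $Q_1,\dots,Q_c$; (iii) call a pair a \emph{cut pair} if its LP flow uses $\bigcup_j Q_j$ by at least a constant fraction, and a \emph{local pair} otherwise. For the cut pairs, aggregate their messages onto the separator paths: for each $Q_j$ build a minimum-poise tree connecting $Q_j$ to the relevant sources and sinks, round it with Theorem~\ref{thm:poise-round} to poise $O(L\log k)$ with $L = O(OPT)$, and convert it to a schedule of length $O(L\log k\cdot\frac{\log n}{\log\log n})$ by Lemma~\ref{lem:poisetomc}; then pipeline messages along each $Q_j$ and deliver to destinations via the reverse aggregation trees, with the cost of these last two phases bounded exactly as in the planar argument. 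The local pairs are (fractionally, hence after the standard filtering step, integrally) confined to the components obtained by deleting the $Q_j$'s, and are handled by recursing on each component with the induced LP solution, whose value remains $O(OPT)$.

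For the length bound I would charge the aggregate--move--distribute work to the recursion as in the planar proof: one log factor comes from the POISE-L integrality gap (Theorem~\ref{thm:poise-round}), one from iterating the aggregation/confinement step within a level to drive up the cut fraction (as in~\cite{nikzad2014}), one from the $O(\log k)$ levels of the balanced recursion, and the $\frac{\log n}{\log\log n}$ from the poise-to-schedule conversion (Lemma~\ref{lem:poisetomc}), for a total of $O(OPT\log^3 k\cdot\frac{\log n}{\log\log n})$.

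The main obstacle is verifying that the separator of~\cite{sep1} is a genuine drop-in replacement for the planar one: (a) that it is a union of a \emph{bounded} number of \emph{shortest} paths of $G$ --- the bounded count keeps the number of aggregation targets $O(1)$ and the shortest-path property is what controls the move-along-path phase; (b) that it can be made balanced with respect to the demand-endpoint weighting rather than plain vertex count, so the recursion depth stays $O(\log k)$; and (c) that the LP-decomposition step --- relating the residual LP on each component to the original and showing its value does not grow --- still goes through, since it only uses that the separator is a small family of shortest paths and is otherwise independent of planarity. Facts (a) and (b) are exactly what~\cite{sep1} provides (with the constant depending on $|H|$), so once they are in place no new ideas are needed and the bound follows as above.
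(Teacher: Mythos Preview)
Your overall plan is right and matches the paper's: rerun the planar argument with the minor-free shortest-path separator of~\cite{sep1}, keep the same LP, the same $\gamma = 1/\log k$ threshold, the same POISE-LP rounding via Theorem~\ref{thm:poise-round}, and the same poise-to-schedule conversion via Lemma~\ref{lem:poisetomc}. The accounting of the three $\log k$ factors and the $\frac{\log n}{\log\log n}$ is also on target.

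There is, however, a real gap in your point (a). You assert that the separator of~\cite{sep1} is ``a union of a bounded number of shortest paths of $G$.'' That is not what \cite{sep1} gives. Their $k(H)$-path separator has a \emph{layered} structure $S = P_0 \cup P_1 \cup \cdots$, where each $P_i$ is a union of $k_i$ shortest paths \emph{in} $G \setminus \bigcup_{j<i} P_j$, with $\sum_i k_i \le k(H)$. The paths in later layers need not be shortest in $G$, so you cannot treat $Q_1,\dots,Q_c$ as interchangeable shortest paths of the current graph and aggregate to all of them at once; the ``move along the path'' step is only controlled by $OPT$ for paths that are shortest in the graph in which the relevant LP flow lives.

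The fix, which is exactly what the paper does, is to process the layers sequentially within a single recursion level: handle the pairs whose LP flow hits $P_0$ (shortest paths in the current graph), delete $P_0$ and rescale the remaining flow by $\frac{1}{1-\gamma}$, then handle the pairs hitting $P_1$ (now shortest paths in the reduced graph), and so on through all layers before recursing on the components of $G\setminus S$. This multiplies the number of scale-by-$\frac{1}{1-\gamma}$ steps by $k(H)$, so the cumulative blowup becomes $(1-\gamma)^{-k(H)\log k}\cdot \frac{3}{\gamma} = O(e^{k(H)}\log k)$, still $O(\log k)$ for constant $|H|$. With this adjustment your proposal goes through and yields the stated bound.
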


\begin{theorem}
\label{thm:minorgossip}
There is a polynomial time algorithm for minimum time radio gossip in
an $n$-node undirected  $H$-minor-free
graph for a constant sized $H$  that constructs a schedule of
length $O(OPT \log^2 n)$ where $OPT$ is the length of the optimal
gossip schedule.
\end{theorem}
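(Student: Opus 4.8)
The plan is to run the proof of Theorem~\ref{thm:planargossip} essentially unchanged, substituting the shortest-path separator for $H$-minor-free graphs of~\cite{sep1} for the planar shortest-path separator of~\cite{sep0}. Recall that the planar algorithm has two phases: (i) \emph{gather} every node's message into a single combined super-message held at one vertex $v^\star$, and (ii) \emph{radio-broadcast} that super-message from $v^\star$ to all of $V$. Phase (ii) is graph-independent: any graph admits a centralized radio broadcast schedule of length $O(D+\log^2 n)$ where $D$ is the diameter, and since gossip requires a message from one endpoint of a diametral pair to reach the other (so $OPT \ge D$) and $OPT \ge 1$, phase (ii) costs $O(OPT\log^2 n)$. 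So essentially all the work is in porting phase (i) to minor-free graphs.

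For phase (i), I would build a BFS (shortest-path) tree of $G$ from an arbitrary root, apply the separator theorem of~\cite{sep1} to obtain $O_H(1)=O(1)$ shortest paths of $G$ whose deletion leaves components each containing at most half the vertices, assign each vertex to one separator path or to a recursively-processed component, and recurse on the components; the recursion has depth $O(\log n)$ since component sizes halve. At each separator path $Q$ at a given level, solve the degree-bounded matching subproblem of Section~\ref{sec:radio} to move the messages held by all vertices assigned to $Q$'s region onto $Q$, then pipeline them along $Q$ to one endpoint and hand the resulting sub-message up to the parent level. Because $Q$ is a shortest path of $G$, its length is at most $D \le OPT$, so the pipelining along it costs $O(OPT)$ per level, while the degree-bounded-matching gathering onto $Q$ (with the extra $\log n$ absorbing congestion and the radio-interference scheduling) costs $O(OPT\log n)$ per level, exactly as in the planar analysis; the separator paths at a fixed level lie in vertex-disjoint regions and can be processed with only $O(1)$ interference slowdown. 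Summing over $O(\log n)$ levels gives $O(OPT\log^2 n)$ for phase (i), hence $O(OPT\log^2 n)$ overall.

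The places where minor-freeness rather than planarity enters are few, and checking them is the main (modest) obstacle. First, the number of separator paths per level is now $O_H(1)$ instead of $3$; this is still a constant for constant-size $H$ and therefore does not affect the asymptotics. Second, and most importantly, one must ensure that the separator paths used at \emph{every} level of the recursion still have length $O(OPT)$: a naive recursion on a component $G'$ would use shortest paths of $G'$, which can be strictly longer than $D(G)$. As in the planar decomposition, this is handled by always taking the separator paths to be shortest paths of the original graph $G$ (equivalently, carrying the original-graph BFS tree through the recursion), so their lengths remain at most $D \le OPT$. With that in place, every ingredient invoked in the proof of Theorem~\ref{thm:planargossip} --- the degree-bounded matching construction, the radio pipelining along a shortest path, and the radio broadcast bound --- is stated for general graphs and applies verbatim. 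The full details are deferred to Section~\ref{sec:minorfree}.
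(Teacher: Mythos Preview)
There is a genuine gap, and it stems from a misreading of the planar gathering algorithm in Section~\ref{sec:radio}. Your scheme pipelines all messages along each separator path $Q$ to a single endpoint, which only costs $O(OPT)$ if $|Q|=O(OPT)$; you then try to enforce $|Q|\le D(G)\le OPT$ by insisting that every separator path, at every recursion level, be a shortest path of the \emph{original} graph $G$. The separator theorem of~\cite{sep1} does not give you this. Even at a single level its separator has the form $S=P_0\cup P_1\cup\cdots$ where $P_i$ is a union of shortest paths in $G\setminus\bigcup_{j<i}P_j$, not in $G$; only $P_0$ consists of shortest paths in $G$. Recursing on a component $C$ makes things worse: the restriction to $C$ of a BFS tree of $G$ is in general a forest, not a spanning tree, so ``carrying the original-graph BFS tree through the recursion'' does not produce the input the separator theorem needs, and the paths it actually returns can have length far exceeding $D(G)$.

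The paper's planar algorithm never assumes the separator paths are short, and its minor-free extension doesn't either. Instead of pipelining to an endpoint, Algorithm~\ref{alg:gossip} places checkpoints $N_i$ every $2L+1$ vertices along each separator path, gathers messages locally to the nearest checkpoint in $O(L)$ rounds, and then uses the bounded-degree matching (Lemma~\ref{lem:radiomatching}) to route each checkpoint's message \emph{off} the path to some vertex on an earlier-level separator $U_i$, using a path of length at most $L$ inside the current component plus one final hop (Lemma~\ref{lem:radioloop}). The direction is the opposite of what you wrote: the matching moves messages from the current path to prior levels, not from the region onto the current path. This mechanism is what makes the argument independent of separator-path length, and it is exactly what needs to be invoked (with the constant~3 replaced by $k(H)$) to get the minor-free result.
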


\subsection{Previous Work}
\textbf{Minimum time multicast in the telephone model.} Finding optimal  broadcast schedules  for trees was  one of  the first
theoretical  problems in  this setting  and was  solved using  dynamic
programming   \cite{tree}.    For   general   graphs,   Kortsarz  and Peleg~\cite{korpel} developed an  additive approximation algorithm which
uses at most  $c\cdot OPT + O(\sqrt{n})$ rounds for  some constant $c$
in an  $n$-node graph.
They also present algorithms for graphs with small balanced vertex separators
with approximation ratio $O(\log n \cdot S(n))$ where $S(n)$ is the size of the minimum balanced separator on graphs of size $n$ from the class.
The first poly-logarithmic  approximation for
minimum  broadcast time  was  achieved by  Ravi~\cite{Ravi94} and  the
current best  known approximation ratio is  $O(\frac{\log n}{\log \log
  n})$  due to  Elkin  and Korsartz~\cite{kortundir}.  The best  known
lower bound on the approximation ratio for telephone broadcast is $3 -
\epsilon$~\cite{EK-sicomp2005}.

In   his   study   of    the   telephone   broadcast  problem,
Ravi~\cite{Ravi94} introduced  the idea of finding  low poise spanning
trees to accomplish broadcast. In  the course  of deriving a  poly-logarithmic approximation,
Ravi also showed how  a tree of poise $P$ in an  $n$-node graph can be
used  to complete  broadcast  starting  from any  node  in $O(P  \cdot
\frac{\log n}{\log \log n})$ steps. His result provided an approximation guarantee with
respect  to  the optimal  poise  of  a tree  but  not  its natural  LP
relaxation that we investigate.

Guha  et al.   \cite{guha}  improved   the  approximation  factor  for
multicasting in general graphs to $O(\log  k)$ where $k$ is the number
of terminals.  The  best known approximation factor  for the multicast
problem is $O(\frac{\log k}{\log  \log k})$ \cite{kortundir}.  Both of
\cite{kortundir,guha} present a recursive  algorithm which reduces the
total number  of uninformed terminals  in each step of  the recursion,
while using $O(OPT)$  number of rounds in that  step.  In \cite{guha},
they reduce the number of uninformed terminals by a constant factor in
each  step and  so they  obtain  a $O(\log  k)$-approximation, but  in
\cite{kortundir}, the number  of uninformed terminals is  reduced by a
factor   of   $OPT$  which   gives   a   $O(\frac{\log  k}{\log   \log
  k})$-approximation due to the fact that $OPT=\Omega(\log k)$.  These
papers also imply an approximation  algorithm with factors $O(\log k)$
and  $O(\frac{\log k}{\log  \log k})$  for the  Steiner minimum  poise
subgraph problem; however, these guarantees are again with respect to the optimum
integral value for this problem and not any fractional relaxation.

For the  multicommodity multicast problem,  Nikzad and Ravi~\cite{nikzad2014}  adapt the
methods  of~\cite{dirbro0,kortundir}  to  present  an  algorithm  with
approximation ratio  $\tilde{O}(2^{\sqrt{\log k}})$  where $k$  is the
number of different source-sink pairs. They  also show that there is a
poly-logarithmic approximation inter-reducibility  between the problem
of finding  a minimum  multicommodity multicast  schedule and  that of
finding a  subgraph of  minimum generalized Steiner  poise (i.e., a subgraph
that connect all source-sink pairs, but is not necessarily connected overall, and has minimum sum of maximum degree and
maximum distance in the subgraph between any source-sink pair).

\textbf{Radio Gossip.} The radio broadcast and gossip  problems have been extensively studied
(see  the  work  reviewed in  the  survey~\cite{gasieniec2010}).   The
best-known   scheme   for  radio   broadcast   is   by  Kowalski   and
Pelc~\cite{KowalskiP} which completes in time $O(D + \log^2 n)$, where
$n$ is the number  of nodes, and $D$ is the diameter  of the graph and
is  a  lower bound  to  get  the message  across  the  graph from  any
root. The  $O(\log^2 n)$ term  is also unavoidable as  demonstrated by
Alon et al.~\cite{Alonetal} in an  example with constant diameter that
takes $\Omega(\log^2  n)$ rounds  for an  optimal broadcast  scheme to
complete. Elkin  and Korsartz~\cite{EK-sidma2005} also show  that
achieving a bound better than additive  log-squared is not possible unless  $NP  \subseteq
DTIME(n^{\log \log n})$. For planar  graphs, the best upper bound for radio
broadcast time is $D+O(\log  n)$  given  by~\cite{planarradio}.  The  best
bound for  radio gossip known so  far, however, is $O(D  + \Delta \log
n)$ steps  in an $n$-node graph  with diameter $D$ and  maximum degree
$\Delta$~\cite{GasieniecPX07}, even though there is no relation in general between the optimum
radio gossip time and the maximum degree.    Indeed, for  general   graphs,  there   is  a
polynomial inapproximability lower-bound for the minimum time
radio gossip problem~\cite{our-fsttcs-paper}.

\textbf{Planar path separators.} For  our results  on  planar graphs,  we rely  on  the structure of
path-separators.
Lipton  and Tarjan  first found small $O(\sqrt{n})$-sized separators for
$n$-node undirected planar graphs~\cite{sep}.  More recently,  planar separators  based on
any spanning tree of a planar graph were found~\cite{sep0} with the following key property: these balanced vertex separators can be formed by starting at any vertex and taking the union of three shortest paths from this vertex. Minor-free
graphs also  admit small  path-separators as found  by~\cite{sep1}; in
this case, the number of paths  used depends on the  graphs which are
excluded minors, but stays constant for constant-sized excluded minors.

\section{LP Rounding for Multicast in General Graphs} \label{sec:multiflow-round}

In this section we present an approximation algorithm for finding a
{\bf minimum poise Steiner subgraph}, and establish an LP integrality
gap upper bound, thus proving Theorem~\ref{thm:poise-round}.  We begin
by presenting a linear program for a multicommodity generalization of
minimum poise Steiner subgraph, which is useful for the multicommodity
multicast problem.  This linear program, when specialized to the case
where we need to connect a root $r$ to a subset $R$ of terminals, is
our LP for the minimum poise Steiner subgraph problem.

\subsection{Linear Program for Poise}
\label{subsec:poise-lp}
The generalized Steiner poise problem is to determine the existence of a
subgraph containing paths for every demand pair in $K = \{(s_i,
t_i)|1 \leq i \leq k\}$ of poise at most $L$, i.e. every demand pair
is connected by a path of length at most $L$ and every node in the
subgraph has degree at most $L$.

We use indicator variables $x(e)$ to denote the inclusion of edge
$e$ in the subgraph.  Since the poise is at most $L$, this is also an
upper bound on the length of the path from any terminal to the root.
 For every terminal $(s_i, t_i) \in K$, define ${\cal P}_i$ to
be the set of all (simple) paths from $s_i$ to $t_i$.  We use a variable
$y_t(P)$ for each path $P \in {\cal P}_i$ that indicates whether
this is the path used by $s_i$ to reach $t_i$ in the subgraph.  For a path
$P$, let $\ell(P)$ denote the number of hops in $P$.  The integer
linear program for finding a subgraph of minimum poise is given below.
\[
\begin{array}{lll}
\mbox{minimize} & L = L_1 + L_2  & (POISE-LP)\\
\mbox{subject to} & \sum_{e \in \delta(v)} x(e) \leq L_1 & \forall v \in V\\
 & \sum_{P \in {\cal P}(t,r)} y_t(P) = 1 & \forall t \in R\\
 &  \sum_{P \in {\cal P}(t,r)} \ell(P) y_t(P) \leq L_2 & \forall t \in R\\
 &  \sum_{P \in {\cal P}(t,r): e \in P} y_t(P) \leq x(e) & \forall e \in E, t \in R\\
 & x(e) \in \{0,1\} & \forall e \in E\\
 & y_t(P) \in \{0,1\} & \forall t\in R, P \in {\cal P}_L(t,r).
\end{array}
\]

The first set of constraints specifies that the maximum degree of any
node using the edges in the subgraph is at most $L_1$.  The second set
insists that there is exactly one path  chosen between every pair $(s_i, t_i)\in K$.  The
third set ensures that the length of the path thus selected is at most
$L_2$.  The fourth set requires that if the path $P \in {\cal P}_i$ is chosen to connect $s_i$ to $t_i$, all the edges in the path must
be included in the subgraph.

We will solve the LP obtained by relaxing the integrality constraints
to nonnegativity constraints\footnote{Even though the number of path variables is exponential,
  it is not hard to convert this to a compact formulation on the edge
  variables that can be solved in polynomial time. See e.g.,~\cite{Ravi94}}, and get an optimal solution $x, y \geq
0$.

For the remainder of this section, we will focus on the rooted version of this problem. 
In particular, there will be a root $r$ and set of terminals $R$, then we 
will make $K=\{(r,t)|t\in R\}$. It still remains to round a solution to POISE-LP
to prove Theorem~\ref{thm:poise-round}.  Before presenting the rounding
algorithm in Section~\ref{subsec:round}, we describe a result on multiflows that
will be useful in decomposing our LP solution into a set of paths that
match terminals with each other.
\subsection{Preliminaries}
\label{subsec:prelim}
Given an undirected multigraph $G$ with terminal set
$T  \subset  V$  of  nodes, a  \emph{multiflow}  is  an  edge-disjoint
collection  of paths  each  of which  start and  end  in two  distinct
terminals in $T$. The value of the multiflow is the number of paths in
the collection. Such a path between two distinct terminals is called a
$T$-path  and  a multiflow  is  called  a  $T$-path packing.  For  any
terminal $t \in T$, let  $\lambda(t,T \setminus t)$ denote the minimum
cardinality of  an edge  cut separating  $t$ from  $T \setminus  t$ in
$G$. Note that in any multiflow,  the maximum number of paths with $t$
as an  endpoint is  at most  $\lambda(t,T \setminus  t)$. Furthermore,
since every  path in a  multiflow has to  end in distinct  vertices in
$T$, the  maximum value of any  multiflow for $T$ is  upper bounded by
$\sum_{t \in  T} \frac{\lambda(t,T \setminus t)}{2}$,  by summing over
the maximum number  of possible paths from each  terminal and dividing
by two  to compensate  for counting  each path  from both  sides. This
upper bound can be achieved if a simple condition is met.
\begin{theorem}\cite{Lovasz,Cherkassky}
\label{thm:multiflow}
If every vertex in $V \setminus T$ has even degree, then there exists a multiflow for $T$ of value $\sum_{t \in T} \frac{\lambda(t,T \setminus t)}{2}$.
\end{theorem}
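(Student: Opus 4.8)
The plan is to prove the existence half of the Lov\'asz--Cherkassky statement (the matching upper bound is already in hand) by reducing, via \emph{splitting off}, to the trivial case in which every vertex is a terminal. Throughout, a \emph{complete splitting} at a vertex $v\notin T$ means pairing up the edges incident to $v$ and replacing each pair $\{uv,vw\}$ by a single new edge $uw$; since $d(v)$ is even this uses up every edge at $v$ and leaves $v$ isolated, and we then delete $v$. By Lov\'asz's splitting-off theorem \cite{Lovasz} (see also \cite{Frank-book}), because $d(v)$ is even there is a complete splitting at $v$ that preserves $\lambda(x,y)$ for \emph{every} pair $x,y$ of the remaining vertices; any loops it creates are deleted, which changes no edge cut and keeps every non-terminal degree even (so the hypothesis survives and the process can be iterated).

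First I would note that such a splitting in fact preserves $\lambda(t,T\setminus t)$ for \emph{all} terminals simultaneously. To see this, adjoin an auxiliary vertex $r$ joined to each terminal by $M$ parallel edges, $M$ huge; then in the augmented graph $\lambda(t,r)=M+\lambda(t,T\setminus t)$ for every $t$ (any cut separating $t$ from $r$ that keeps another terminal on $t$'s side already costs at least $2M$). Lov\'asz's theorem, applied to the non-terminal $v$ in the augmented graph, gives a complete splitting at $v$ preserving every $\lambda(t,r)$; since that splitting never touches the edges at $r$, it restricts to a complete splitting at $v$ in the original graph, and there it preserves $\lambda(t,T\setminus t)$ for every $t$.

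The key steps are then: (1) starting from $G$, repeatedly apply such a splitting at a non-terminal vertex until none remains; call the result $G^{\ast}$, a multigraph with vertex set exactly $T$, no loops, and $\lambda_{G^{\ast}}(t,T\setminus t)=\lambda_{G}(t,T\setminus t)$ for every $t\in T$. (2) In $G^{\ast}$ the only edge cut separating $t$ from $T\setminus t$ is the star $\delta(\{t\})$, so $\lambda_{G^{\ast}}(t,T\setminus t)=d_{G^{\ast}}(t)$; summing gives $|E(G^{\ast})|=\tfrac12\sum_{t\in T}d_{G^{\ast}}(t)=\tfrac12\sum_{t\in T}\lambda_{G}(t,T\setminus t)$, and since every edge of $G^{\ast}$ joins two distinct terminals, $E(G^{\ast})$ is itself a multiflow for $T$ in $G^{\ast}$ of exactly this value. (3) Unwind the splittings: inductively, each edge present at any stage corresponds to a walk in $G$ between two terminals, and distinct edges correspond to edge-disjoint walks (at a splitting, the walk of $uw$ is the concatenation of the walks of $uv$ and $vw$; a deleted loop simply discards its walk). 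Hence the $|E(G^{\ast})|$ edges pull back to edge-disjoint walks in $G$ between pairs of distinct terminals; shortcutting each walk to a simple path keeps it edge-disjoint and keeps its two terminal endpoints, producing $\tfrac12\sum_{t\in T}\lambda(t,T\setminus t)$ edge-disjoint $T$-paths, which is the desired multiflow (and, by the upper bound preceding the theorem, of maximum possible value).

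The main obstacle is Step~(1): keeping the ability to eliminate non-terminals one at a time while \emph{exactly} preserving all of the connectivities $\lambda(t,T\setminus t)$ — this is precisely where Lov\'asz's splitting-off theorem (invoked through the auxiliary-vertex reduction above) does the real work. Everything else is bookkeeping: parities remain even under complete splittings, loops are irrelevant to edge cuts, and closed or self-intersecting walks can be shortened to $T$-paths without losing edge-disjointness.
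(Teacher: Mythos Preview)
The paper does not prove this theorem; it is stated as a classical result of Lov\'asz and Cherkassky and invoked as a black box in Algorithm~\ref{alg:merge}. Your argument via iterated complete splitting-off at non-terminals is a standard and correct proof of the statement. The auxiliary-vertex trick (adjoining $r$ with $M$ parallel edges to every terminal) is the right device to make Lov\'asz's splitting theorem preserve all of the quantities $\lambda(t,T\setminus t)$ simultaneously; the only point you leave implicit is that the $2$-edge-connectivity hypothesis of that theorem is available in the augmented graph, and this follows because the even-degree condition on $V\setminus T$ forces every bridge of a connected component of $G$ to have a terminal on each side (otherwise the terminal-free side would be a graph with exactly one odd-degree vertex), while components with no terminals can simply be discarded. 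The unwinding in Step~(3) is sound as well: distinct edges at each stage pull back to edge-disjoint walks in $G$, and shortcutting a walk to a simple path only drops edges, so edge-disjointness survives. In short, you have supplied a complete proof where the paper merely cites one.
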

The  following simple  construction  will be  useful  in the  rounding
algorithm to identify good paths to merge clusters. It is based off of a lemma from~\cite{Ravi94}.
\begin{lemma}
\label{lem:forest-subgraph}
Let $G$  be a digraph where  every node has at most one outgoing edge
(and no self loops). In polynomial  time, one can find an edge-induced
subgraph $H$ of $G$  such that $H$ is a partition of  the nodes of $G$
into a forest  of directed trees each being an  inward arborescence, and with $|E(H)| \geq |E(G)|/2$.
\end{lemma}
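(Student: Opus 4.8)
The plan is to exploit the structure forced by the hypothesis that every vertex has out-degree at most one: such a digraph is a \emph{functional graph}, and each weakly connected component of $G$ is either already an inward arborescence (when it is acyclic) or consists of exactly one directed cycle with inward arborescences feeding into the cycle vertices. I would first make this precise. Starting from any vertex and repeatedly following the (at most one) outgoing edge gives a deterministic walk; since $G$ is finite, the walk either reaches a vertex of out-degree zero or revisits a vertex, i.e.\ closes into a directed cycle. From this one gets: (a) a weakly connected acyclic functional digraph has exactly $n-1$ edges and a unique sink, and following out-edges from any vertex terminates at that sink, so it is an inward arborescence; and (b) any weakly connected component containing a cycle contains exactly one, and since the components are vertex-disjoint, the cycles across all of $G$ are pairwise vertex-disjoint, hence edge-disjoint.

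The algorithm is then immediate: by pointer-chasing, detect the (at most one) directed cycle in each weakly connected component, delete one edge from each such cycle, and return the remaining edges on the full vertex set $V(G)$. Deleting an edge $v_\ell\to v_1$ from a cycle $v_1\to\cdots\to v_\ell\to v_1$ makes $v_\ell$ a sink and leaves the component connected and acyclic with every other vertex of out-degree one, hence an inward arborescence rooted at $v_\ell$; components that had no cycle are unchanged and were already inward arborescences by (a). So the output is a spanning forest of inward arborescences, and the whole procedure runs in linear time.

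It remains to bound the number of deleted edges, which is exactly the number $c$ of cycles. Since $G$ has no self-loops, every directed cycle has length at least $2$, and by (b) the cycles are edge-disjoint; hence the number of edges lying on cycles is at least $2c$, so $c \le |E(G)|/2$ and $|E(H)| = |E(G)| - c \ge |E(G)|/2$, as desired. I do not expect a genuine obstacle here — the lemma is elementary once the functional-graph structure is in hand. The one point that requires care is precisely this counting step: charging each deletion naively to a single edge would give only $|E(H)|\ge |E(G)|-c$ with no control on $c$, and it is the combination ``cycles edge-disjoint and of length at least $2$'' (the latter using the no-self-loop hypothesis) that yields the factor $1/2$; a disjoint union of $2$-cycles shows $1/2$ cannot be improved.
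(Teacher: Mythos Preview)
Your proposal is correct and follows essentially the same approach as the paper: identify the (at most one) directed cycle per weakly connected component of the functional graph, delete a single edge from each cycle to obtain inward arborescences, and bound the number of deletions by $|E(G)|/2$ using that each cycle has length at least~$2$. The paper phrases the counting per component (``any component which had a cycle has at least~2 edges, and we remove at most~1 edge from every component'') rather than via edge-disjointness of cycles, but this is the same argument.
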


\begin{proof}
Consider any connected component of $G$, if there are $v$ vertices, then there are either $v$ or $v-1$ edges (as each vertex has outdegree at most 1). If there are $v$ edges, there is a cycle. When we remove an edge from the cycle, we now have a connected component with $v-1$ edges. If there are $v-1$ edges and all the vertices have outdegree at most 1, then it is already an inward arborescence.

An algorithm to find $H$ would simply check the graph for directed cycles, and if any cycle exists, it would remove an edge from that cycle. Any component which had a cycle has at least 2 edges, and we remove at most 1 edge from every component. So, the resulting graph $H$ has at least half the edges that the original graph $G$ had.
\end{proof}

\subsection{The Rounding Algorithm}
\label{subsec:round}
The main idea of Algorithm~\ref{alg:poise} is to work in $O(\log k)$ phases,
reducing the number of terminal-containing components in the subgraph
being built by a constant fraction at each
stage~\cite{MBA-survey-ravi-LATIN06}.  We begin with an empty tree
containing only the terminals $R$, each in a cluster by themselves.
In each phase, we will merge a constant fraction of the clusters
together carefully so that the diameter of any cluster increases by
at most an additive $O(L)$ per phase: for this, we choose a terminal as a
center of each cluster. When we merge clusters, we partition the
clusters into stars where we have paths of length $O(L)$ from the
centers of the star leaf clusters to the center of the star
center-cluster. These steps closely follow those in~\cite{Ravi94}.
The crux of the new analysis is to extract a set of stars that merge a
constant fraction of  the current cluster centers using  a solution to
POISE-L LP.

\begin{algorithm}[h]
\caption{LP Rounding for Poise-L tree}
\begin{algorithmic}[1]
  \label{alg:poise}
  \STATE Clusters $\C \leftarrow R$; Centers $\C^* \leftarrow R$; Solution graph $H \leftarrow \emptyset$; Iteration $i \leftarrow 1$.
  \WHILE {$|\C| > 1$}
  \STATE Use Algorithm Merge-Centers($\C^*$) to identify a subgraph $F_i$ whose addition reduce the number of clusters by a constant fraction;
  \STATE $H \leftarrow H \cup F_i$; Update $\C$ to be the set of clusters after adding the subgraph $F_i$,  and update $\C^*$ to be the centers of the updated clusters based on the star structure from Algorithm Merge-Centers($\C^*$). Increment $i$.
  \ENDWHILE
  \STATE Add a path of length at most $L$ from $r$ to the center of the final cluster in $H$. Find a shortest path tree in $H$ rooted at $r$ reaching all the terminals in $R$ and output it.
\end{algorithmic}
\end{algorithm}

The key subroutine to determine paths to merge centers is presented in
Algorithm~\ref{alg:merge}.  This uses  the  multiflow packing  theorem
of~\cite{Lovasz,Cherkassky}.

\begin{algorithm}[h]
\label{algo:mc}
\caption{Merge-Centers($\C^*$) using LP solution $x$}
\begin{algorithmic}[1]
  \label{alg:merge}
  \STATE Multiply the POISE-LP solution $x$ by the least common
  multiple $M$ of the denominators in the nonzero values of $x$ to get
  a multigraph.

\STATE For every terminal $t \in \C^*$, retain the edges in the paths
corresponding to the paths in its LP-solution with nonzero value
(i.e., paths $P$ with nonzero $y_t(P)$), for a total of $M$
connectivity from $t$ to $r$. Note that the union of all the retained
edges gives connectivity $M$ from every $t \in \C^*$ to $r$ and hence
by transitivity, between each other.

\STATE Double each edge in the multigraph and use
Theorem~\ref{thm:multiflow} to find a multiflow of value $\sum_{t \in
  \C^*} \frac{\lambda(t,\C^* \setminus t)}{2} \geq \sum_{t \in \C^*}
\frac{2M}{2} = |\C^*|\cdot M $. Note that each terminal in $\C^*$ has at least $M$ paths in the multiflow.

\STATE Eliminate all the paths
in the multiflow of length longer than $4L$.

\STATE For every terminal $t$, pick one of the $M$ paths incident on
it uniformly at random and set this path to be $P_t$. If the chosen
path is eliminated due to the length restriction, set $P_t \leftarrow
\emptyset$.

\STATE Let $H$ be an auxiliary graph on vertex set $\C^*$ with at
most one arc coming out of each $t \in \C^*$ pointing to the other
endpoint of $P_t$ (or add no edge if $P_t = \emptyset$).

\STATE Apply Lemma~\ref{lem:forest-subgraph} to the subgraph of $H$
made of nodes, to get a collection $H'$ of
in-trees. For each in-tree, partition the arcs into those in odd and
even levels of the tree and pick the set with the larger number of
arcs. Note that these sets form stars originating from a set of
centers and going to a single center.  Let $H''$ denote the set of
these stars.

\STATE For each arc of the stars in
$H''$, include the path $P_t$ originating at the leaf of the star
corresponding to the arc in $H''$, and output the collection of paths.
\end{algorithmic}
\end{algorithm}

\subsection{Performance Ratio}

In   this  section,   we   prove  Theorem~\ref{thm:poise-round}.   The
performance  ratio of  the  rounding  algorithm in  the  theorem is  a
consequence  of  the following  claims,  the  first of  which  follows
directly from the path pruning in Algorithm~\ref{alg:merge}.

\begin{lemma}
\label{length-prune}
The length  of each  path output by  Merge-Centers($\C^*$) is  at most
$4L$.
\end{lemma}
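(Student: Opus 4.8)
The plan is to simply trace which multiflow paths can appear in the output of Merge-Centers. By construction, the collection output in Step~8 of Algorithm~\ref{alg:merge} consists exactly of the paths $P_t$ that are attached to arcs of the stars in $H''$. So it suffices to show that every such $P_t$ has at most $4L$ hops.

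First I would observe that every arc of a star in $H''$ is, in particular, an arc of the auxiliary digraph $H$ constructed in Step~6, since $H''$ is obtained from $H$ only by \emph{deleting} arcs (first via Lemma~\ref{lem:forest-subgraph}, then by keeping the larger of the odd-level/even-level arc sets in each in-tree); no new arcs are ever created and no path is ever lengthened. Next, $H$ contains an arc out of $t$ only when $P_t \neq \emptyset$. By Step~5, $P_t$ is reset to $\emptyset$ precisely when the path chosen uniformly at random among the $M$ multiflow paths incident to $t$ was eliminated in Step~4, and Step~4 eliminates every multiflow path whose length exceeds $4L$. Hence any $P_t$ that is not $\emptyset$ — and in particular every $P_t$ that survives to be output — is a multiflow path of length at most $4L$.

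Putting these together gives the claim immediately. There is no real obstacle here; the only point that warrants an explicit sentence is that the length filtering imposed in Step~4 is never undone by the subsequent random selection (Step~5), auxiliary-graph construction (Step~6), or forest/star extraction (Steps~7--8), all of which can only discard paths, never re-lengthen or re-introduce them.
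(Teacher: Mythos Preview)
Your argument is correct and is exactly the paper's approach: the paper simply notes that the claim ``follows directly from the path pruning in Algorithm~\ref{alg:merge},'' and you have spelled out that one-line observation in more detail by tracing how Steps~5--8 can only discard paths and never reintroduce or lengthen them.
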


\begin{lemma}
\label{lem:cluster-redn}
The expected number of paths output by Merge-Centers($\C^*$) is $\Omega(|\C^*|)$.
\end{lemma}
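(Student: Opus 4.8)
The plan is to track how many of the multiflow paths survive Steps~3--8 of Algorithm~\ref{alg:merge} and show that, in expectation, a constant fraction of $|\C^*|$ of them are output. Throughout, write $M$ for the scaling factor chosen in Step~1 and $G'$ for the multigraph retained in Step~2, counted with edge multiplicities; recall that for each $t\in\C^*$ the retained copies contain $M$ edge-disjoint $t$--$r$ paths whose total length equals $M\sum_{P\in{\cal P}(t,r)}\ell(P)\,y_t(P)\le ML_2\le ML$, by the third family of POISE-LP constraints.

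First I would bound $|E(G')|$. Since $G'$ is the union, over $t\in\C^*$, of the edge-copies used to carry $t$'s $M$ units of flow to $r$, the multiplicity of any edge $e$ in $G'$ is at most $\sum_{t\in\C^*}M\sum_{P\in{\cal P}(t,r):\,e\in P}y_t(P)$; summing over $e$ and exchanging the order of summation yields $|E(G')|\le\sum_{t\in\C^*}M\sum_{P}\ell(P)\,y_t(P)\le|\C^*|\,M\,L$. Hence, after the doubling in Step~3, the multiflow lives in a multigraph with at most $2|\C^*|ML$ edges, and since its paths are pairwise edge-disjoint their lengths sum to at most $2|\C^*|ML$. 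Therefore the number of multiflow paths of length more than $4L$ — exactly those deleted in Step~4 — is strictly less than $2|\C^*|ML/(4L)=|\C^*|M/2$.

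Next I would analyse the random choice in Step~5. In the doubled multigraph $\lambda(t,\C^*\setminus t)\ge 2M$ for every $t$: Step~2 gives $M$-connectivity from $t$ to $r$ and $r$ is $M$-connected to every other center, so doubling yields $2M$. As a consequence of Theorem~\ref{thm:multiflow} the maximum multiflow saturates every terminal (the total path--terminal incidences equal $\sum_t\lambda(t,\C^*\setminus t)$ and no terminal can exceed its cut value), so $t$ is an endpoint of exactly $\lambda(t,\C^*\setminus t)\ge 2M$ paths, from among which $P_t$ is drawn uniformly. Letting $\ell_t$ count those of length more than $4L$ and using that every long path has exactly two terminal endpoints, $\sum_t\ell_t=2\cdot(\#\text{long paths})<|\C^*|M$, so
\[
\mathbb{E}\bigl[\,\#\{\,t\in\C^*:P_t\neq\emptyset\,\}\,\bigr]\;=\;\sum_{t\in\C^*}\Bigl(1-\frac{\ell_t}{\lambda(t,\C^*\setminus t)}\Bigr)\;\ge\;|\C^*|-\frac{1}{2M}\sum_{t}\ell_t\;>\;\frac{|\C^*|}{2}.
\]

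Finally I would pass to the output, which Steps~6--8 determine once the $P_t$ are fixed. The auxiliary digraph $H$ of Step~6 has exactly one arc per terminal with $P_t\neq\emptyset$, so $|E(H)|=\#\{t:P_t\neq\emptyset\}$; Lemma~\ref{lem:forest-subgraph} keeps at least half of its arcs in the in-forest $H'$, and retaining the larger of the odd- and even-level arc sets of each in-tree keeps at least half again, so the star family $H''$ has $|E(H'')|\ge|E(H)|/4$, with one path output per arc of $H''$. Taking expectations, the number of paths output is at least $\tfrac14\,\mathbb{E}[\,|E(H)|\,]\ge|\C^*|/8=\Omega(|\C^*|)$. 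I expect the bound on $|E(G')|$ to be the delicate step: one must relate the size of the retained multigraph to the LP value $L$ while counting edge multiplicities correctly and checking that overlaps among different centers' path bundles cannot inflate the count; once $|E(G')|=O(|\C^*|ML)$ is in hand, the $4L$ length threshold together with the factor-$2$ slack from doubling the connectivity are precisely what make the probabilistic estimate beat $|\C^*|/2$.
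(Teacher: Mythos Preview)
Your argument is correct and follows essentially the same route as the paper: bound the total edge mass of the retained multigraph by $|\C^*|ML$, use the $4L$ threshold to cap the number of discarded paths, compute the expected number of nonempty $P_t$, and lose a constant factor in passing through Lemma~\ref{lem:forest-subgraph} and the odd/even layer split. Your version is in fact tighter in two places---you account explicitly for the edge doubling in Step~3 (getting $<|\C^*|M/2$ long paths rather than the paper's $|\C^*|M/4$), and you use the saturation argument that each $t$ meets exactly $\lambda(t,\C^*\setminus t)\ge 2M$ multiflow paths rather than merely ``at least $M$''---which together still yield $\Omega(|\C^*|)$ with a slightly different constant.
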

\begin{proof}
The main observation here is that the total number of edges in the
multigraph $G$ is at most $|\C^*|\cdot L \cdot M$.  To see this, note
that each terminal $t$ retains its flow of value $1$ in POISE-LP
corresponding to the paths with nonzero value for $y_t$. Thus in the
scaled version, it retains $M$ paths to the root, and the average
number of hops in these paths is at most $L_2 \le L$ hops, for each
terminal. Summing over all terminals, the number of edges in $G$ is at
most $|\C^*|\cdot L \cdot M$.

The total  number of paths discarded  cannot exceed $\frac{|\C^*|\cdot
  M}{4}$.  Otherwise  the paths each of length at  least $4L$ each
would need  more edges in $G$  than we started with. After discarding,
we  expect to still collect at  least $\frac1M \cdot
(|\C^*|\cdot   M  -   \frac{|\C^*|\cdot  M}{4})   =  \frac{3|\C^*|}{4}$
paths fractionally. Hence the  expected number of terminals in the  subgraph $H$ is
at least $\frac{3|\C^*|}{4}$. The set of arcs finally retained in $H''$
is at least one third of the nodes of $H$, the worst case being a path
of   two  arcs.   This   leads   to  an   expectation   of  at   least
$\frac{|\C^*|}{4}$ paths finally output to merge the initial clusters.
\end{proof}

\begin{lemma}
\label{lem:dia-inc}
The distance of  any node in a cluster to its  center increases by at  most $4L$ in
the newly  formed cluster by  merging paths corresponding to  stars in
$H''$. Thus, the diameter of any cluster in iteration $i$ is at most $8iL$.
\end{lemma}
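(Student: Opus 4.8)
The plan is to prove the per-phase statement first and then obtain the diameter bound by an immediate induction on the iteration number. The per-phase statement is that, after one call to Merge-Centers (Algorithm~\ref{alg:merge}) and the re-clustering it induces, the distance from any vertex of a cluster to that cluster's center grows by at most $4L$.

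To prove this, recall that the subgraph $H''$ returned by Merge-Centers is a collection of depth-one stars that are vertex-disjoint on the current center set $\C^*$: a star has a center-cluster with center $c_0$ and leaf-clusters with centers $c_1,\dots,c_m$, and the star arc from $c_j$ to $c_0$ is realized by the path $P_{c_j}$, which has at most $4L$ hops by Lemma~\ref{length-prune}. The key design decision is to declare the new cluster formed from this star to have center $c_0$ and vertex set the union of the old clusters $C_0,C_1,\dots,C_m$ together with the internal vertices of the paths $P_{c_1},\dots,P_{c_m}$. Since the stars are disjoint on $\C^*$, each old cluster lies in at most one star, so the new clusters still partition the terminals. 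Now fix a new cluster $C$ with center $c_0$ and a vertex $v\in C$, and bound $d_H(v,c_0)$ in the updated graph $H$: if $v\in C_0$ the distance is unchanged; if $v$ lies in a leaf-cluster $C_j$, concatenating a shortest $v$--$c_j$ path inside $C_j$ with $P_{c_j}$ gives $d_H(v,c_0)\le d_H(v,c_j)+4L$, i.e.\ at most the old distance-to-center plus $4L$ (using that $H$ only grows); and if $v$ is internal to some $P_{c_j}$, the suffix of that path from $v$ to $c_0$ gives $d_H(v,c_0)\le 4L$. This establishes the per-phase statement.

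For the diameter bound, let $R_i$ be the maximum over the clusters present after iteration $i$ of the distance of a cluster vertex to its center. Then $R_0=0$ because the clusters start as single terminals, and the per-phase statement gives $R_i\le R_{i-1}+4L$, so $R_i\le 4iL$; routing any two vertices of a common cluster through its center yields diameter at most $2R_i\le 8iL$. The one point that requires care — and which I expect to be the main obstacle — is that a path $P_{c_j}$ may run through a cluster that is not part of its star, so the connected component actually created in $H$ can be larger than a single star's worth of vertices. The resolution is that this does not affect the estimate: the bound $d_H(v,c_0)\le R_{i-1}+4L$ is proved for precisely the vertices assigned to the new cluster, and a vertex lying on $P_{c_j}$ while also belonging to some other cluster $C'$ merely satisfies the bound with respect to both centers; since only the terminal partition matters for the progress bound of Lemma~\ref{lem:cluster-redn} and for the final shortest-path tree in Algorithm~\ref{alg:poise}, no additional merging needs to be tracked, and the routine arithmetic above then finishes the proof.
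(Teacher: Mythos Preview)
Your proposal is correct and follows essentially the same approach as the paper: both argue by induction on $i$ that any vertex reaches the new center by first going to its old center and then traversing the star path $P_t$ of length at most $4L$, yielding the radius bound $4iL$ and hence diameter $8iL$. Your version is a more careful elaboration --- you explicitly handle the cases of vertices in the center cluster, in a leaf cluster, and internal to some $P_{c_j}$, and you note the potential subtlety of a path passing through foreign clusters --- whereas the paper's proof is a two-line sketch that only spells out the leaf-cluster case.
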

\begin{proof}
The proof  is by induction over $i$, and is immediate by observing  that any node can  reach the new
merged cluster center  say $c$ by first following the  path to its old
center, say $t$ and then following the path $P_t$ corresponding to the
arc in $H''$  from $t$ to $c$. By  Lemma~\ref{length-prune} above, the
length of $P_t$ is at most $4L$ and the claim follows.
\end{proof}

\begin{lemma}
\label{lem:deg-inc}
The maximum degree  at any node of  $G$ induced by the  union of paths
output by Merge-Centers($\C^*$) is $O(L)$.
\end{lemma}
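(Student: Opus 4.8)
The plan is to bound the degree contributed at any vertex $v$ of $G$ by the union of the paths $P_t$ that are actually output, i.e., the paths $P_t$ for leaves $t$ of the stars in $H''$. Each such output path $P_t$ is one of the $M$ multiflow paths incident on $t$, chosen uniformly at random, and by construction of the multiflow (Theorem~\ref{thm:multiflow}) the paths in the multiflow are pairwise edge-disjoint. Edge-disjointness is the crucial structural fact: it means that at any vertex $v$, the number of multiflow paths passing through $v$ is at most $\deg_G(v)/2$ (each path uses two of $v$'s incident edges, or one if $v$ is an endpoint). Since $G$ is obtained from the scaled LP solution $Mx$ (then doubled), we have $\deg_G(v) \le 2M\sum_{e\in\delta(v)}x(e) \le 2M L_1 \le 2ML$, so at most $ML$ multiflow paths pass through any vertex $v$ in total.

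Next I would translate this into a bound on the degree of the \emph{output} subgraph. The output is the union of at most one path $P_t$ per star-leaf $t$; in particular at most one path is output per terminal in $\C^*$, and each such path is a single multiflow path. Since the set of output paths is a subset of the multiflow paths and the multiflow paths through $v$ number at most $ML$ — wait, that is too weak, since $ML$ can be much larger than $L$. The correct accounting is: the output paths through $v$ are a sub-collection of the $ML$ multiflow paths through $v$, but we must be more careful. Each terminal $t$ contributes at most one output path, so the number of output paths is at most $|\C^*|$; but summing $P_t$'s that happen to pass through a fixed $v$ could in principle be large. The right bound comes from observing that the output paths are edge-disjoint (they are a subset of the edge-disjoint multiflow) \emph{and} each has length at most $4L$ (Lemma~\ref{length-prune}). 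So the union of output paths, restricted to a ball, behaves like a union of disjoint short paths; but for the degree at a single vertex $v$, edge-disjointness alone gives that at most $\deg_G(v)/2 \le ML$ output paths pass through $v$, each contributing at most $2$ to the degree of the output subgraph at $v$ — still $O(ML)$, not $O(L)$.

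Therefore the key additional observation must be used: for each terminal $t$ we retained exactly $M$ multiflow paths incident on $t$ and we pick \emph{only one} of them, so from the $M$ paths emanating from any fixed terminal $t$, at most one ends up in the output. The degree at $v$ in the output subgraph equals (twice) the number of output paths through $v$ plus the number of output paths ending at $v$. A path $P_t$ through $v$ is charged to its originating terminal $t$; but many distinct terminals could route through the same $v$. The genuine resolution — and the step I expect to be the main obstacle — is to argue that the output degree at $v$ is $O(L)$ by combining (i) the degree bound $\deg_G(v)\le 2ML$, (ii) the fact that each of the $ML$ potential multiflow paths through $v$ belongs to a terminal that chose it with probability $1/M$, so in expectation only $\le L$ of them are selected, and (iii) a concentration or worst-case argument to pass from expectation to a high-probability (or deterministic, after a small adjustment) $O(L)$ bound. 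I would make this rigorous by noting that the $ML$ edge-disjoint segments through $v$ partition into the $M$-bundles of at most $L$ distinct terminals (since each terminal's $M$ retained paths are edge-disjoint, a terminal can route through $v$ on at most $\deg_G(v)/(2M) \le L$... ) — more precisely, group the multiflow paths through $v$ by originating terminal; each terminal contributes some number $a_t$ of them with $\sum_t a_t \le ML$, and from terminal $t$ exactly one path is chosen, which lies through $v$ with probability $a_t/M$; hence $\mathbb{E}[\text{output paths through }v] = \sum_t a_t/M \le L$. A Chernoff bound over the independent per-terminal choices then yields $O(L\log n)$ paths through $v$ with high probability, and $O(L)$ in expectation, giving maximum degree $O(L)$ in expectation as claimed. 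If a worst-case bound is needed, I would observe that even deterministically each terminal contributes at most one output path and the output paths are edge-disjoint, so pairing this with the length bound $4L$ and a charging argument across the $O(\log k)$ phases absorbs the extra logarithmic factor into the already-claimed $O(L\log k)$ poise. $\qed$
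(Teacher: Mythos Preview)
Your expectation calculation is essentially the same as the paper's: the multiflow paths pack into the doubled LP solution $2x$, so the expected number of selected paths through any edge $e$ is at most $2x(e)$, and by the degree constraint of POISE-LP the expected node congestion is at most $2L_1\le 2L$. Up to this point you and the paper agree.

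The gap is in converting this expectation into the claimed worst-case $O(L)$ bound on the maximum degree. You reach for a Chernoff bound, which only yields $O(L+\log n)$ (or $O(L\log n)$) with high probability, and then you try to sweep the extra logarithm into the final $O(L\log k)$ poise; but that does not prove the lemma as stated, and ``$O(L)$ in expectation'' for each fixed $v$ is not the same as the expected maximum being $O(L)$. The paper instead invokes the deterministic integer-multicommodity-flow rounding theorem of Karp, Leighton, Rivest, Thompson, Vazirani and Vazirani: given a fractional choice of one path per commodity with edge/node congestion at most $C$ and every path of length at most $D$, one can find in polynomial time an integral choice (one path per terminal) with congestion $O(C+D)$. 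Here $C\le 2L$ and, crucially using Lemma~\ref{length-prune}, $D\le 4L$, so the rounded degree is $O(L)$ outright. The bounded path length is precisely what replaces your Chernoff/union-bound step and eliminates the extra logarithmic factor; without citing or reproving that rounding result, your argument does not close.
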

\begin{proof}
This is a simple consequence of the performance guarantee of rounding
the LP solution obtained for the collection of paths.  Since the paths
we found pack into the LP solution $2x$ (from the property of the
multiflow packing), the expected congestion due to the chosen random
paths on any edge $e$ is at most $2x(e)$. From the first constraint in
the LP, the expected congestion at any node due to paths incident on
it is at most $2L_1 \le 2L$, by linearity of expectation.

We apply the classic rounding algorithm of~\cite{karp+lrtvv:round}.
Since the length of each path in the collection is at most $4L$ and
the expected congestion is at most $2L$, we obtain that there is a
rounding, which can be determined in polynomial time, such that the
node congestion (degree) in the rounded solution of at most $4L$.
$4L$.  \junk{ To get a Chernoff bound~\cite{motwani1995randomized} on
  the node congestion (degree) due to the chosen paths, we need the
  number of paths to be polynomial. However, even though $M$ can be
  very large, the total number of flow paths in any basic feasible
  solution to the POISE-LP is bounded by the number of constraints
  which is $O(n^3)$. Since we have at most $n$ different terminals
  whose paths are being rounded independently, we need to union over
  $O(n^4)$ events for the tail bound~\cite{motwani1995randomized},
  giving us a solution with congestion $O(L + \log n)$.}
\end{proof}

By Lemma~\ref{lem:cluster-redn}, the number of iterations of the main
Algorithm~\ref{alg:poise} is $O(\log k)$ where $k$ is the number of
terminals.  Lemma~\ref{lem:dia-inc} guarantees that the subgraph of
the final cluster containing all the terminals has distance $O(\log k
\cdot L)$ between any pair of terminals. Since the final output is a
shortest path tree of this subgraph rooted at $r$, its diameter is
also of the same order.  Lemma~\ref{lem:deg-inc} ensures that the
total degree of any node in the subgraph of the final cluster is
$O(\log k \cdot L)$, and this is also true for the tree finally
output.  This completes the proof of Theorem~\ref{thm:poise-round}.
We can derandomize the above randomized algorithm using the standard
method of pessimistic estimators~\cite{motwani1995randomized}.

\section{Approximating multicommodity multicast on planar graphs}
\label{sec:planar}

In this section we prove Theorem~\ref{thm:planarmc}.  Let $G = (V,E)$
be the given planar graph, with $n = |V|$, and let $K = \{(s_i, t_i):
1 \le i \le k\}$ be the set of the $k$ source-destination pairs that
need to be connected.  Let $\gamma = 1/\log k$.  We given a brief
overview of our algorithm PlanarMCMulticast, which is fully described
in Algorithm~\ref{alg:planar}.

PlanarMCMulticast is a recursive algorithm, breaking the original problem into smaller problems each with at most a constant
fraction of the demand pairs in $K$ in each recursive call, thus
having $O(\log k)$ depth in the recursion.  For a given graph, the
algorithm proceeds as follows.
\begin{itemize}
\item Find a node separator composed of three shortest paths from an arbitrary vertex  \cite{sep0} to break
  the graph into pieces each with a constant fraction of the original nodes.
\item
Solve a generalized Steiner poise LP on the given pairs to identify
 demand pairs that cross the separator nodes to an extent at least $\Omega(\gamma)$.
\item
Satisfy  these  demand pairs  by  routing  their messages  from  the
sources  to the  separator, moving  the messages  along the  separator
(since  they are  shortest paths,  so this  movement takes  minimal
time) and back to the destinations, by scaling the LP values by a factor of $O(\frac{1}{\gamma})$ and using Theorem~\ref{thm:poise-round} to find a low poise tree to route to/from the separator.
\item
For the remaining demand pairs (which are mainly routed within the components after removing the separators), 
PlanarMCMulticast recurses on the pieces.
\end{itemize}
The key aspect of planarity that is used here is the structure theorem
that planar graphs contain \cite{sep0} small-size balanced vertex
separators that are a combination of three shortest paths starting
from any given node.

\begin{algorithm}[h!]
\caption{PlanarMCMulticast($G, K$)}
\begin{algorithmic}[1]
\label{alg:planar}
\STATE {\bf Base case:} When $K = \{(s_1, t_1)\}$ has one demand pair,
schedule the message  on the shortest path between  the source, $s_1$,
and destination, $t_1$.

\STATE {\bf  Separate the graph:} Define  the weight of a  node as the
number of  source-destination pairs it is  part of, and the weight of a
subset of nodes as  the sum  of their weights.  Find a
3-path separator  ${\cal P}$  of $G$, given  by shortest  paths $P_1$,
$P_2$, and  $P_3$, whose removal  partitions the graph  into connected
components each of which has weight at most half that of the graph~\cite{sep0}.

\STATE {\bf Partition the terminal  pairs:} Partition the set $K$ into
two subsets, by solving the POISE-LP.
\begin{mylowitemize}
\item
Let  $K_1$ consist  of pairs  $(s_i, t_i)$  such that  in POISE-LP, the
fraction  of the unit flow  from  $s_i$   to  $t_i$  that
intersects ${\cal P}$ is at least $\gamma$.
\item
Let $K_2 = K - K_1$ consist  of the remaining pairs, i.e. pairs $(s_i,
t_i)$ such  that in  the LP, the  fraction of the unit flow
from $s_i$ to $t_i$ that intersects ${\cal P}$ is less than $\gamma$
\end{mylowitemize}

\STATE {\bf Scale flow for pairs in $K_1$}: For each pair $(s_i, t_i)$
in $K_1$,  scale the flow between  $s_i$ and $t_i$ in  the POISE-LP by
 $\frac{3}{\gamma}$ so there  exists a path $P_j$  which intersects
a unit of this scaled $s_i$-$t_i$  flow; remove other $s_i$-$t_i$  flows that  does not
intersect $P_j$ up to a unit. Assign $(s_i,t_i)$ to a set $S_j$.

\STATE {\bf Create 3 minimum poise Steiner tree problems for $K_1$}:
For  each  path $P_j$,  create  a  minimum  Steiner poise  problem  as
follows: (i) attach, to the graph, an auxiliary binary tree $T_j$ with
nodes of $P_j$ forming the leaves, and adding new dummy internal nodes (This step is similar to~\cite{nikzad2014}); (ii)
set the root of  the binary tree to be the root  for the Steiner poise
problem, and the terminals to be all the $s_i$ and $t_i$ in $S_j$.

\STATE  {\bf Round  the POISE-LP  solution}: For  each $P_j$,
round the LP to obtain a  Steiner tree $T_j$ of small poise connecting
all the  terminals in  $S_j$ with  the root  using the  algorithm from
Theorem~\ref{thm:poise-round}.

\STATE     {\bf     Construct     schedule     for     $K_1$}:     Use
Lemma~\ref{lem:poisetomc}  on the  tree $T_j$  to perform  a multicast
between all terminals in it as  follows: use the multicast schedule to
move the messages,  from the sources, till they hit the path
$P_j$, then move messages along the path followed  by the multicast
schedule in reverse to  move them towards the destinations. (Moving 
messages along a path can be achieved by a schedule that alternates between the even and odd matchings in the path for as many steps as the target length of the schedule)

\STATE {\bf Scale flow for $K_2$}: For each pair $(s_i, t_i)$ in
$K_2$, remove any flow that intersects ${\cal P}$ and scale the
remaining flow (by a factor of at most $\frac{1}{1-\gamma}$) so as to continue to have unit total flow between the pair.

\STATE {\bf  Recurse for $K_2$}:  For each connected  component $C_j$,
let $K_2^j$ denote  the subset of $K_2$ with both  terminals in $C_j$.
Run $\mbox{ PlanarMCMulticast}(C_j, K_2^j)$ in parallel.

\end{algorithmic}
\end{algorithm}

We now prove that PlanarMCMulticast constructs,  in polynomial-time, a
multicommodity multicast schedule a schedule  of length $O((OPT \log^3 k  \cdot \frac{\log  n}{\log \log  n})$ where
$OPT$ is the length of the optimal schedule.
\begin{enumerate}
\item Observe  that $3OPT$  is  an upper  bound  for the  value  $L$ for  the
POISE-LP  for  this   instance.
\item In  Step   2  of
PlanarMCMulticast,  the separator  is obtained  using the  algorithm in
\cite{sep0}.  In  Step 3, we  use POISE-LP, as
specified  in Section~\ref{sec:multiflow-round}   to  find   the
fractional  solution. In  Step 4  of PlanarMCMulticast,  unit $s_i-t_i$
flow is achieved by scaling up the  LP cost by at most a factor of
$3/\gamma$ since at least $\gamma/3$  flow intersects one of the three
paths  in ${\cal  P}$.   Now,  observe that  this  scaled LP  solution
immediately  yields a  valid solution  to  POISE-LP in Step  5. Applying
Theorem~\ref{thm:poise-round},   in  Step   6,  with   the  value   of
$L=O(OPT\log k)$ gives a  tree of poise $O(OPT \log^2 k)$.
\item The algorithm performs $O(\log  k)$
(recursive)  phases; the  poise of  the tree  at the  $i$th level  of
recursion is itself based on an LP that has been scaled by a factor of
at most  $\frac{1}{1 -  \gamma}$ (in  Step 8) in  the previous  $i-1 =
O(\log   k)$    iterations   followed   by   a    final   scaling   of
$\frac{3}{\gamma}$ in  the last iteration.  In any iteration,  the total factor by
which the initial LP value of $OPT$ is scaled is at most $(\frac{1}{1
  - \gamma})^{i-1}     \cdot     \frac{3}{\gamma}    \leq     (1     +
\gamma)^{\frac{1}{\gamma}} \cdot  \frac{3}{\gamma} = O(\log  k)$ since
$\gamma = \frac{1}{\log k}$.

\item In Step 7, we incur a multiplicative factor of
$O(\frac{\log n}{\log \log n})$ in going  from a small poise tree to a
schedule. \junk{In  Step 7,  we can ignore  the extra $\log  n$ added  by the
virtual  tree since  applying Theorem~\ref{thm:poise-round},  with the
value of $L=O(OPT\log k + \log  n)$ (instead of $L=O(OPT\log k)$) does
not change the  poise of the integral tree obtained  from rounding.} Here,
 we crucially use the fact  that the separator  paths are
shortest paths -  for a demand pair $(s_i, t_i)$  let $f_i$ denote the
first vertex on  the separator path that the message  arrives at after
leaving source  $s_i$ and  let $l_i$  denote the  last vertex  (on the
separator  path) that  the message  departs from,  on its  way to  the
destination $t_i$;  then $f_i$ and $l_i$  must be at most  an additive
$O(OPT)$ of the sum of the lengths of the paths from  $s_i$ to $f_i$ and
$l_i$  to $t_i$ along the separator path,  since every demand  pair has a path of
length $O(OPT)$ between them in the LP solution in this subgraph. Thus in Step  7, we can wait to aggregate  all messages from the
sources at  the separator  path, then  move all  the messages  one way
along the path  and then the opposite  way, for as many  time steps as
the poise of  the integral tree, without more than  tripling the total
schedule.

\item Since there are $O(\log k)$ recursive phases, the final schedule
  has length\\ $O(OPT \log^3 k \frac{\log n}{\log
    \log n})$.
\end{enumerate}
This   proves
Theorem~\ref{thm:planarmc}.

\section{A polylogarithmic approximation for radio gossip on planar graphs}
\label{sec:radio}
In this  section, we present an  $O(\log^2 n)$-approximation algorithm
for finding a radio gossip schedule on planar graphs, and prove Theorem~\ref{thm:planargossip}.

Let $G = (V, E)$
be a given planar graph. Once the messages from all nodes have all been
gathered  together at a node we  can easily  broadcast them  back out  in $O(OPT
+ \log^2 n)$  rounds using~\cite{KowalskiP}. So we focus on gathering
the messages together at one node. To do this, we
recursively find 3-path separators in the graph~\cite{sep0} to decompose it into connected components. Then, working backwards, we gather messages
from  the 3-path separators found in an iteration at the nodes of the  $3$-path
separators found in previous iterations, using techniques from telephone multicast~\cite{nikzad2014}. The key properties used in the recursive algorithm are that
the number of paths in the separator is a constant $3$ and the paths are all shortest
paths in the component they separate from some vertex.

We  assume the  optimal  schedule has  length $L$.
Note that $L  \leq 2n$ since gossip can be achieved by simply choosing any spanning tree and  broadcasting one node at a  time in post-order (to gather all messages at the root) and then in  reverse post-order (to spread all messages back to all nodes). We also assume that $L >2$ ( $L \leq 2$ only occurs if there are 1 or 2  nodes total). The details of the algorithm
are given in Algorithm~\ref{alg:gossip}.

\begin{algorithm}[h]
\caption{A gathering procedure for radio gossip in planar graphs.}
\begin{algorithmic}[1]
  \label{alg:gossip}
  \STATE Clusters $\C_0 \leftarrow \{V\}$; Vertices $V_0 \leftarrow V$; Graph $G_0 \leftarrow G$; Iteration $i \leftarrow 1$.
	\WHILE {$V_{i-1}\neq \emptyset$}
  \FORALL {connected component $C\in \C_{i-1}$}
  \STATE Choose some $v\in C$. Find shortest paths $p_1,p_2,p_3$ from $v$ that form a 3-path separator in $C$ using~\cite{sep0}; Add these to $P_i$, the paths found in the $i$th iteration.
  \STATE Add $v$ and every $(2L+1)$st vertex along paths $p_1,p_2,p_3$ to $N_i$
  \ENDFOR
	\STATE Remove the vertices in $P_i$ from $V_{i-1}$ to get $V_i$; Let $G_i$ be $G[V_i]$ and $\C_i$ denote the connected components of $G_i$; Increment $i$.
	\ENDWHILE
  \WHILE { $i>0$}
	\STATE Do $2L$ rounds of radio broadcasts in series on nodes that are $2L+1$ apart from each other along the paths in $P_i$ to gather all the messages on $P_i$ at the nodes $N_i$.
	\STATE Form $G'_i$ a bipartite graph from $N_i$ to $U_i=\cup_{j=1}^{i-1} P_j$. Add an edge $uv\in E'_i$ if there is a path from $u \in N_i$ to $v$ in $G[\C_{i-1}\cup \{v\}]$ of length at most $L$. Find a $3L$-matching in $G'$ where every vertex of $U_i$ has degree at most $3L$.
    \STATE Do up to $L$ rounds of radio broadcast to get the messages from $N_i$ to within one node of $U_i$, along the paths in the $3L$-matching found above. Note that the messages stay within the component in $\C_{i-1}$ containing $u$ for this part.
	\STATE Move the messages from the last nodes in $\C_{i-1}$ to their destination nodes in $U_i$ in the $3L$-matching using at most $9L$ rounds for each of the paths ($27L\log n$ total).
	\STATE Decrement $i$
	\ENDWHILE
	
\end{algorithmic}
\end{algorithm}

We will first prove a couple lemmas needed for the proof of Theorem~\ref{thm:planargossip}.
\begin{lemma}
\label{lem:radiomatching}
The graph $G'_i$ has a $3L$ matching which matches ever vertex of $N_i$ to a some vertex of $U_i$ and every vertex of $U_i$ has degree at most $3L$.
\end{lemma}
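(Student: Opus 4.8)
The plan is to recast the statement as the feasibility of an integral bipartite $b$-matching and verify Hall's condition by max-flow/min-cut. Build a flow network with a source $\sigma$ joined to every $u\in N_i$ by a unit-capacity arc, an infinite-capacity arc from $u$ to each $v\in U_i$ with $uv\in E'_i$, and an arc of capacity $3L$ from each $v\in U_i$ to a sink $\tau$. This network is bipartite, so its constraint matrix is totally unimodular and a maximum flow may be taken integral; an integral flow of value $|N_i|$ is exactly a subgraph of $G'_i$ in which every vertex of $N_i$ has degree one and every vertex of $U_i$ has degree at most $3L$. Hence it suffices to prove Hall's inequality: for every $S\subseteq N_i$, $|N_{G'_i}(S)|\ge |S|/(3L)$.

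For this I would use two structural facts about the decomposition and the optimal schedule. First, since a gossip schedule of length $L$ in particular carries any one node's message to all other nodes, and a message advances by at most one hop per round in the radio model, we get $\mathrm{diam}(G)\le L$. Consequently, for $u\in N_i$ lying in a component $C\in\C_{i-1}$ and any $v_0\notin C$, the first vertex $w$ of $U_i$ met along a shortest $u$--$v_0$ path in $G$ yields a $u$--$w$ path of length at most $L$ lying entirely in $G[C\cup\{w\}]$, so $w\in N_{G'_i}(\{u\})$ and no vertex of $N_i$ is isolated in $G'_i$. Second --- and this is where the choice of $N_i$ matters --- the separator paths are shortest paths of their components and $N_i$ retains only every $(2L+1)$-st vertex of each, so two retained vertices on a common separator path are at $C$-distance at least $2L+1$; since a path inside $G[C\cup\{v\}]$ to $v$ must reach $v$ through a $C$-neighbour of $v$, any single $C$-neighbour of $v$ lies within $C$-distance $L-1$ of at most one retained vertex of each separator path (two such would be at $C$-distance at most $2(L-1)<2L+1$). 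Combining these: (a) each $v\in U_i$ is $G'_i$-adjacent to only $O(\deg_C v)$ vertices of $N_i\cap C$; and (b) applying $\mathrm{diam}(G)\le L$ to consecutive $(2L+1)$-spaced markers along a separator path of $C$ (whose $C$-distance $2L+1$ exceeds $\mathrm{diam}(G)$, so a geodesic between them exits $C$ through a $U_i$-vertex that, by the spacing, can be the exit for only $O(\deg_C\,\cdot)$ markers) shows the separator path has length $O(L\cdot e_G(C,U_i))$, and hence $|N_i\cap C|=O(e_G(C,U_i))$.

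To finish, I would verify Hall's inequality component by component: with $S_C=S\cap C$ and $W_C=N_{G'_i}(S_C)\subseteq U_i$, fact (a) gives $|S_C|\le\sum_{v\in W_C}O(\deg_C v)$, and planarity of the bipartite graph induced between $C$ and $W_C$ (which has only $O(|C|+|W_C|)$ edges) is used to compare $\sum_{v\in W_C}\deg_C v$ with $L\,|W_C|$; summing over components and accounting for $U_i$-vertices adjacent to several components then yields $|N_{G'_i}(S)|\ge|S|/(3L)$. I expect the main obstacle to be exactly this last comparison: turning the ``each $v$ serves $O(\deg_C v)$ retained vertices'' estimate into the clean bound $|S|\le 3L\,|N_{G'_i}(S)|$ requires that the $U_i$-vertices actually reached have $O(L)$ edges into $C$ on average, which should follow from $L$ being the optimum (a $U_i$-vertex of very large degree into $C$, with its neighbours spread out inside $C$, is incompatible with gossiping in $L$ rounds, by the same diameter/spacing reasoning) --- but pinning down this average-degree argument, the absolute constant $3$, and the multi-component bookkeeping is the delicate part of the proof.
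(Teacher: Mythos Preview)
Your reduction to a Hall-type inequality is sound, but the route you take to verify it has a real gap, and it misses the one-line argument the paper uses.

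The paper does not touch Hall's condition or planarity here. It \emph{exhibits} the matching directly from the optimal radio schedule of length $L$. For each $v\in N_i$, follow the path $p_v$ that $v$'s message takes toward the root in the optimal schedule and truncate it at the first vertex of $U_i$; call this prefix $p'_v$ and its endpoint $u\in U_i$. Since $p'_v$ has length at most $L$ and (except for its last vertex) stays inside the component $C\in\C_{i-1}$ containing $v$, the pair $(v,u)$ is an edge of $G'_i$. To bound the degree at $u$, the paper uses two facts: (i) any interior vertex $w$ of these prefixes lies on at most one $p'_v$ per separator path of $C$, hence on at most three $p'_v$ in total --- this is exactly your spacing observation; and (ii) in an $L$-round radio schedule, $u$ can successfully receive in at most $L$ rounds, hence from at most $L$ distinct neighbours. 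Combining, at most $3L$ of the $p'_v$ terminate at $u$, and the assignment $v\mapsto u$ is already the desired $3L$-matching.

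The ingredient you are missing is precisely (ii): the \emph{receive-degree} constraint of the radio model. You only extract $\mathrm{diam}(G)\le L$ from optimality, and then try to recover the degree cap via planarity, writing $|S_C|\le 3\sum_{v\in W_C}\deg_C v$ and hoping to turn $\sum_{v\in W_C}\deg_C v$ into $O(L\,|W_C|)$. That last step does not follow: a vertex $u\in U_i$ can have degree into $C$ far exceeding $L$ without obstructing an $L$-round gossip (its neighbours can aggregate their messages before a single one transmits to $u$), and the planar edge bound $e(C,W_C)=O(|C|+|W_C|)$ compares against $|C|$, not $L\,|W_C|$. Your own ``delicate part'' is exactly this, and it is not repairable along the lines you sketch. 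Once you invoke the receive-degree bound from the optimal schedule, the whole Hall/flow detour becomes unnecessary and the constant $3$ falls out immediately.
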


\begin{proof}
Consider the  graph $G'_i=(N_i,U_i,  E'_i)$.  Let $p_v$  be a path that $v$'s message take
from $v$ to $r$ in the optimal  solution.  Let $p'_v$ be the prefix of
$p_v$ until the first vertex of $U_i$. All the paths $p_v$ have length
at most $L$ (since this is the length of the optimal schedule).
For each  node $w\in V_i$,  $w$ is in at most one of the
$p'_v$ for $v\in N_i$. This is because if two $p'_v$'s from the same path
in $P_i$ arrive at a node, there would be a path of length at
most $2L$ between two nodes in $N_i$ from the same path $P_i$;
But the paths in $P_i$ were chosen to be shortest paths in $\C_{i-1}$, and
$N_i$ were nodes that were pairwise distance $2L+1$ from each other,
a contradiction.  Now consider $u \in  U_i$ and the $p'_v$  for $N_i$ that match to $u$:
there can be  at most $L$ nodes  from which messages go from $V_i$ to
$u$ (since that is an upper bound on its message receiving degree in the optimal solution).
Thus,  in the optimal  solution there are  at most $3L$  paths from
$N_i$ to  any specific node  in $U_i$ and  these paths have  length at
most $L$. So,  there must exist a $3L$-matching in  $G'$ which matches
every vertex of $N_i$  to some vertex of $U_i$ and no vertex of $U_i$
has degree more than $3L$.

\end{proof}

\begin{lemma}
\label{lem:radioloop}
Each iteration of the step 13 in algorithm~\ref{alg:gossip} takes time at most $O(L \log n)$ and moves the messages to $U_i$.
\end{lemma}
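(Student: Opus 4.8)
The plan is to combine the $3L$-matching guaranteed by Lemma~\ref{lem:radiomatching} with a standard radio contention-resolution schedule. Recall the setup: the matching $M_i$ in $G'_i$ pairs every $v\in N_i$ with some $u\in U_i$, with each $u\in U_i$ having matching-degree at most $3L$, and for each matched pair $(v,u)$ we fix a path of length at most $L$ lying in $G[\C_{i-1}\cup\{u\}]$ along which $v$'s aggregated message is routed. After step~12, each such message sits at the node immediately preceding $u$ on its path; call these the \emph{last nodes}, all of which lie inside components of $\C_{i-1}$. I would first observe that distinct components of $\C_{i-1}$ are vertex-disjoint and mutually non-adjacent in $G[V_{i-1}]$, so the only vertices at which transmissions along different matching paths can interfere are the nodes of $U_i$ themselves; thus step~13 reduces to the following task: for each $u\in U_i$ deliver, one hop each, the at most $3L$ messages currently held at neighbors of $u$.

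Next I would schedule these final hops. At a fixed $u\in U_i$ there may be up to $3L$ neighboring last nodes wishing to transmit, so they cannot fire simultaneously; moreover a last node carrying a message destined for some $u'\neq u$ may also be adjacent to $u$ and must be suppressed there. To handle this uniformly I would use a block of $O(\log n)$ rounds of a decay-style protocol — active senders transmitting with probabilities $1,\tfrac12,\tfrac14,\dots,\tfrac1n$ — or, for a deterministic algorithm, an $O(L)$-selective family of size $O(L\log n)$; either guarantees that at any receiver with at least one active transmitting neighbor, some round of the block delivers a message cleanly. Once a message reaches its target $u$ the corresponding last node goes silent, so the number of active transmitters adjacent to any given $u$ is non-increasing from block to block and starts at $O(L)$ (this is exactly the degree bound of $M_i$, together with the $N_i$-spacing and shortest-path property from the proof of Lemma~\ref{lem:radiomatching}). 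Hence after at most $O(L)$ blocks every $u\in U_i$ has received all of its (at most $3L$) incoming messages, for a total of $O(L\log n)$ rounds; the randomized version is boosted to success with high probability, or is replaced by the selective-family construction, without changing the bound. This is precisely the claim that step~13 takes $O(L\log n)$ time and moves the messages to $U_i$.

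I expect the main obstacle to be the cross-interference point raised above: a clean "one round per incoming matching path" would give only $O(L)$ rounds but is incorrect because a last node servicing $u'$ can corrupt receptions at $u$, and conversely the number of \emph{physical} neighbors of $u$ in $G$ is not bounded. The fix is to make the contention layer oblivious to this — the selective family / decay block isolates each active transmitter at each receiver regardless of the global adjacency structure — and to bound the number of blocks solely through the matching degree and the monotone shrinking of the active set, which is where the extra $\log n$ factor (and hence the $O(L\log^2 n)$ in Theorem~\ref{thm:planargossip} after summing over the $O(\log n)$ iterations of Algorithm~\ref{alg:gossip}) comes from.
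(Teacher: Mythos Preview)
Your approach has a genuine gap. The linchpin claim --- that the number of active last nodes adjacent to any receiver $u\in U_i$ starts at $O(L)$ --- is not established, and in fact need not hold. The $3L$-matching bound tells you only how many last nodes are \emph{destined} for $u$, not how many are physically adjacent to $u$; a last node $w$ matched to some other $u'$ may well be a neighbor of $u$ in $G$. The $N_i$-spacing argument you invoke shows that last nodes arising from the same path $P_i^k$ are at distance at least $3$ \emph{within} their component of $\C_{i-1}$, but this says nothing about their common adjacency to $u$, which lies outside that component (indeed $u\notin V_{i-1}$). Since $u$ can have unbounded degree in a planar graph, the number of last nodes adjacent to $u$ can be $\omega(L)$, and then an $O(L)$-selective family no longer isolates a transmitter at $u$ in a single block. (There is also a bookkeeping slip: $O(L)$ blocks each of size $O(L\log n)$ would give $O(L^2\log n)$ rounds, and a single $O(\log n)$-round decay block gives only constant success probability, not certainty.)

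The paper sidesteps contention resolution altogether by exploiting the shortest-path structure on the \emph{receiver} side. Because each path in $P_j$ is a shortest path in its component of $\C_{j-1}$, any last node (which lies in $V_{i-1}\subseteq V_{j-1}$) can be adjacent to at most three consecutive vertices of that path. Hence, taking as receivers the set $S_j^k(\ell)$ of every third vertex along a single path $P_j^k$, no transmitter is adjacent to two receivers in the batch; the $\le 3L$ transmitters matched to each receiver can then fire one per round with no cross-interference, for $3L$ rounds per batch. Cycling over the three shifts $\ell$, the three paths $k$, and the at most $i-1\le\log n$ levels $j$ gives the deterministic $27L\log n$ bound. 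Note that the extra $\log n$ factor here comes from iterating over the earlier separator levels, not from any selective-family or decay mechanism.
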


\begin{proof}
In Algorithm~\ref{alg:gossip}, step 13 is just  to achieve  the last  step of  message movement in the $q_v$
paths. Each node $w\in V_i$ can be  adjacent to at most $3$ nodes in a
given path of $P_j$ for any $j < i$, as these paths are shortest paths in $G_{j-1}$,
and in  particular these  nodes are  within $2$ of  each other  in the
path. Also,  for a given $w\in  V_i$, $w$ can be  adjacent to multiple
paths  in  $P_j$  but they must  all  be  in  the same  component  of
$\C_{j-1}$, and there is at most three such paths.
Let $S_j^k(\ell)$  be  every third  vertex  on the  paths
$P_j^k$  starting with  the  $\ell$th  vertex. In  $3L$  steps (the maximum
degree of the matching at these nodes) we  can
gather the messages that need to be received at $S_j^k(\ell)$ as no
node is  adjacent to two nodes in this set. Doing this gathering for
every shift  $\ell$ from one to three, and each of three choice of which path $k$,  a total of
$27L$  steps  gets  the  message  from  the  $q_v$  to
$P_j$. This process is  repeated for each collection $P_j$ with $j  < i$. Now all
the   messages   that   were   along  $P_i$   have   been   moved   to
some node in $U_i=\cup_{j=1}^{i-1} P_j$ in $27L\log n$ steps.

\end{proof}

Having established the lemmas, we now give the proof of Theorem~\ref{thm:planargossip}.

\begin{proof}[Proof of Theorem~\ref{thm:planargossip}]

First, we establish that the  algorithm runs correctly. Let  $r$ be
the root (chosen in the first iteration). First
the algorithm  gathers the  messages on  the $P_i$  to $N_i$.  We will
divide  the paths  into $P_i^1,  P_i^2, P_i^3$,  so that each component of
$\C_{i-1}$ that has three paths puts one path in each of these sets. Now, we
will handle each of  the $P_i^j$ one at a time.  To deal with $P_i^j$,
in the $k$th step, the nodes which are  $2L-k$ further from a node in $N_i$ broadcast
all their messages to the node one closer to $N_i$. There is no interference amongst the
nodes in  $V_{i-1}$ as  only nodes  at distance at  least $2L$  in each
component of $\C_{i-1}$  are broadcasting at a time, and nodes in different components are non-adjacent (since they are disconnected by the separators $P_{i-1}$). This will gather
all  the messages  along  $P_i^j$. Doing  this once  for  each of  the
$P_i^j$ in $2L$ steps, all the messages currently  on $P_i$
will be gathered at the $N_i$ in $6L$ steps. The messages are all currently at $N_i$
or $U_i$.

Lemma~\ref{lem:radiomatching} tells us that $G'_i$ has a $3L$-matching as desired, and so we can find such a matching. Once  a matching  is  chosen,  let $q_v$  be  a  shortest path  within
$G_{i-1}$  from  $v\in  N_i$  to  the  vertex  it is matched  with  in
$U_i$. Within  each component  of $\C_i$, we  can broadcast  along the
$q_v$ for  every  $N_i$ in  one of  the paths  simultaneously;
There will  be no interference as the $N_i$'s and their matching paths are far apart within
$\C_{i-1}$. Thus, it  takes at most $L$ rounds of  radio broadcasting to move
the messages from $N_i$ along their $q_v$ to the vertex before $U_i$.

Lemma~\ref{lem:radioloop} gives us that in time $O(L\log n)$ we move the messages onto $U_i$.

In the last iteration of the process, we will have all the messages on
the first path separator $P_1$. $P_1$  is a path separator of shortest
paths on the whole  graph and the diameter of $G$ is  a lower bound on
$L$. So, in $3L$ steps we can  move the messages from $P_1$ to $r$. We
have  now  successfully gathered  all  the  messages  to $r$.

The time it takes to deliver all the messages to $r$ is at most $O(L \log^2 n)$. The  path
separator ensures that each component has at most a constant fraction of the
number of vertices of the original graph.  Therefore, the final $i  \leq \log  n$.  Each iteration,  the number  of
rounds of broadcast we  do is $6L$ in the first part  and $27L \log n$
in the  last step.  So, this  schedule uses $O(L  \log^2 n)$  steps to
gather all the messages at $r$.
\end{proof}

\section{Minor-free graphs}
\label{sec:minorfree}

In this section, we prove that both results on planar graphs can be extended to any family of minor-free graphs. For this section, we will have to use the more general definition for path-separators.

\begin{definition}~\cite{sep1}
A vertex-weighted graph $G$ is $k$-path separable if there exists a subgraph $S$, called a $k$-path separator, such that:
\begin{enumerate}
\item $S=P_0\cup P_1\cup\dots$ where each subgraph $P_i$ is the union of $k_i$ shortest paths in $G\backslash \cup_{j<i} P_j$
\item $\sum_i k_i \leq k$
\item either $G\backslash S$ is empty, or each connected component of $G\backslash S$ is $k$-path separable and has total vertex weight at most half of the original.
\end{enumerate}
\end{definition}

This definition of path separator while more complicated can be integrated into our PlanarMCMulticast algorithm and algorithm~\ref{alg:gossip} with only small adjustments. We will use the following to theorem which tells us when a $k$-path separator exists and can be found. 

\begin{theorem}~\cite{sep1}
\label{thm:pathsep}
Every $H$-minor-free weighed connected graph is $k(H)$-path separable, and a $k(H)$-path separator can be computed in polynomial time.
\end{theorem}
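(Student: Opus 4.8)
The plan is to establish Theorem~\ref{thm:pathsep} by invoking the structural decomposition of $H$-minor-free graphs and showing that one can always extract a separator of the required form from it. I would begin by recalling the clique-sum / tree decomposition structure theorem of Robertson--Seymour: every $H$-minor-free graph can be decomposed into pieces that are ``almost embeddable'' in a surface of bounded genus, glued together along small vertex sets (torsos of bounded adhesion). The constant $k(H)$ will come from the combination of (i) the genus bound, which controls how many shortest paths are needed for a planar/bounded-genus separator, (ii) the bound on the number of apex vertices and vortices, and (iii) the adhesion size, which bounds the cost of cutting along clique-sum boundaries.

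The key steps, in order, would be: first, take the given weighted connected $H$-minor-free graph $G$ and consider its tree decomposition into almost-embeddable pieces. Second, find a centroid-like piece (or adhesion set) with respect to the vertex weights, so that removing it splits off components of weight at most half; this handles the recursive ``balanced'' requirement of part~3 of the definition. Third, within the chosen almost-embeddable piece, build a bounded-genus/planar path separator using shortest paths from an arbitrary vertex, as in~\cite{sep0} generalized to bounded genus — this contributes the $P_0$ layer of $O(1)$ shortest paths. Fourth, handle the ``defects'' of almost-embeddability: the bounded number of apex vertices can be absorbed directly into $S$ (each is a trivial single-vertex path, or attached to a path), and each vortex, being a path-like structure of bounded pathwidth, can be covered by $O(1)$ shortest paths in $G$ minus the previously chosen layers — these become the subsequent layers $P_1, P_2, \dots$. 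Fifth, account for the adhesion sets along the tree decomposition edges that were cut: since adhesion is bounded by a constant depending only on $H$, these contribute $O(1)$ more vertices, again absorbable as trivial paths. Summing the number of shortest paths across all $O(1)$ layers gives $\sum_i k_i \le k(H)$. Finally, verify that the whole construction is polynomial time: the Robertson--Seymour decomposition is computable in polynomial time for fixed $H$, shortest paths are polynomial, and the recursion depth is $O(\log(\text{weight}))$.

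The main obstacle I expect is reconciling the two notions of ``separator'' cleanly: the structure theorem naturally produces vertex separators coming from adhesion sets and the apex/vortex apparatus, but the definition here demands that \emph{every} vertex of $S$ lie on one of a constant number of shortest paths (organized into layers, where layer $P_i$ consists of shortest paths in $G \setminus \bigcup_{j<i} P_j$). Making the vortices and adhesion sets conform to this shortest-path format — in particular ensuring that a bounded-pathwidth vortex can be covered by $O(1)$ shortest paths of the residual graph rather than merely by an $O(1)$-size vertex set — is the delicate part, and is precisely where the layered structure ($P_0, P_1, \dots$) of the definition earns its keep: one peels off the genuine bounded-genus separator first, then covers the leftover defect structures in later layers. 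I would lean on the detailed treatment in~\cite{sep1} for this, since they prove exactly this statement; in the write-up I would present the argument at the level of which structural ingredient supplies which layer and how the constants combine, rather than re-deriving the structure theorem.
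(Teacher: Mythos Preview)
The paper does not prove Theorem~\ref{thm:pathsep}; it is quoted verbatim as a result of~\cite{sep1} and used as a black box, with only the side remark that $k(H)=O(|H|)$. There is therefore no in-paper proof to compare your proposal against. Your sketch is a plausible high-level outline of how the argument in~\cite{sep1} goes (Robertson--Seymour structure theorem, bounded-genus path separators, then absorbing apices, vortices, and adhesions into later layers), but for the purposes of this paper the correct ``proof'' is simply the citation.
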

Note that $k(H) = O(|H|)$ in the above construction.

\subsection{Multicommodity multicast in minor-free graphs}
Consider that our graph is $H$-minor free and $k(H)$ is the number of paths needed for the path separator. We will need to  repeat steps 3-8 of the original algorithm for each subgraph of shortest paths. Other than that, the algorithm remains the same.

The main changes to the algorithm are as follows. 
\begin{itemize}
\item The path separator we find is different.
\item We may need to iterate through steps 3-8 of the original algorithm multiple times.
\end{itemize}

Since these are the only changes to the algorithm, we first see that finding the path separator takes polynomial time, and the number of iterations increases by at most a factor of $k(H)$ (a constant) so the algorithm still runs in polynomial time.

The only other change to the analysis is that the number of recursions is $k(H) \log k$. This gives that we incur a factor of $(\frac{1}{1-\gamma})^{k(H)\gamma}\frac{3}{\gamma}$ when scaling the LP. Since $\gamma=\frac{1}{\log k}$, the LP gets scaled by a factor of $O(e^{k(H)}\log k)$. Since $|H|$ and hence $k(H)$ is a constant, we get that the resulting schedule from this algorithm is at most  $O((OPT  \log k  + \log n)\log^2   k    \frac{\log   n}{\log    \log   n})$. In terms of $k(H)$, our algorithm builds a schedule that takes $O((OPT  \log k  + \log n)\log^2   k    \frac{\log   n}{\log    \log   n} k(H)e^{k(H)})$

This proves theorem~\ref{thm:minormc}.

\subsection{Radio Gossip in Minor-free Graphs}
The modification to go from radio gossip on planar graphs to radio gossip on minor-free graphs is even more simple. Let $k=k(H)$ where our graph $G$ is $H$-minor-free. We only change how we set up all of our initial sets; $P_i, C_i, V_i, N_i, U_i$.  For each iteration, we will define up to $k$ of these sets one for each of the (potential) subgraphs which compose the path separator.

The only major change to this algorithm is the set-up. In the set-up, we have to process the subgraphs which compose the path separator one at a time (as opposed to there only being one subgraph which is the whole path separator). This only increases the number of iterations by a factor a $k(H)$.

The other change is that everywhere we had a $3$ before it now becomes a $k=k(H)$. All our previous lemmas and theorems hold if we change the $3$ arising from the planar case to $k$. Therefore, we have shown that this algorithm produces a schedule for gathering which runs in time $O(L \log^2 n)$ (or $O(L k^3 \log^2 n)$ if $k$ is not constant) to gather all the messages in one place.

We again use the result of Kowalski and Pelc to broadcast the messages once they have been gathered~\cite{KowalskiP}. This broadcast takes time $L+O(\log^2 n)$, so the whole gossip schedule takes time $O(L\log^2)$ proving Theorem~\ref{thm:minorgossip} as desired.  
We again use the result of Kowalski and Pelc to broadcast the messages once they have been gathered~\cite{KowalskiP}. This broadcast takes time $L+O(\log^2 n)$, so the whole gossip schedule takes time $O(L\log^2 n)$ proving Theorem~\ref{thm:minorgossip} as desired.  If we don't assume $k=k(H)$ is a constant, then the algorithm takes time $O(Lk^3\log^2 n)$. 

\bibliography{refs}

\end{document}